\theoremstyle{plain}
\theoremstyle{definition}
\newtheorem{observation}{Observation}[section]
\theoremstyle{remark}
\newtheorem{theorem}{Theorem}
\newtheorem{lemma}[theorem]{Lemma}
\newtheorem{corollary}[theorem]{Corollary}
\newtheorem{definition}[theorem]{Definition}
\newcommand{\kbnonuniout}{($\{L_u\}$,$\{U_u\}$,$k$,$p$)-\textsc{Center}}
\newcommand{\knonuni}{($L$,$\{U_u\}$,$k$)-\textsc{Center}}
\newcommand{\knonuniout}{($L$,$\{U_u\}$,$k$,$p$)-\textsc{Center}}
\newcommand{\unisoftout}{($L$,$U$,soft-$\emptyset$,$p$)-\textsc{Center}}
\newcommand{\uniout}{($L$,$U$,$\emptyset$,$p$)-\textsc{Center}}
\newcommand{\supnonunisoftout}{($L$,$\{U_u\}$,soft-$\emptyset$,$p$)-\textsc{Supplier}}
\newcommand{\nonuniout}{($L$,$\{U_u\}$,$\emptyset$,$p$)-\textsc{Center}}
\newcommand{\ksupnonuni}{($L$,$\{U_u\}$,$k$)-\textsc{Supplier}}
\newcommand{\ksupbnonuniout}{($\{L_u\}$,$\{U_u\}$,$k$,$p$)-\textsc{Supplier}}
\newcommand{\supnonuniout}{($L$,$\{U_u\}$,$\emptyset$,$p$)-\textsc{Supplier}}
\newcommand{\ksupnonuniout}{($L$,$\{U_u\}$,$k$,$p$)-\textsc{Supplier}}
\newcommand{\supunisoftout}{($L$,$U$,soft-$\emptyset$,$p$)-\textsc{Supplier}}
\newcommand{\CCT}{\textsc{CCT}}
\newcommand{\dist}{\mathrm{d}}
\newcommand{\calC}{{\mathcal C}}
\newcommand{\calF}{{\mathcal F}}
\newcommand{\Z}{{\mathbb{Z}}}
\newcommand{\R}{{\mathbb{R}}}
\newcommand{\bbbr}{{\mathbb{R}}}
\newcommand{\bbbn}{{\mathbb{N}}}
\newcommand{\bbbz}{{\mathbb{Z}}}
\newcommand{\true}{\mathsf{true}}
\newcommand{\topic}[1]{\vspace{0.2cm}\noindent {\bf #1}}
\title{Capacitated Center Problems with Two-Sided Bounds and Outliers}
\author{Hu Ding\footnotemark[1] \and Lunjia Hu\footnotemark[2] \and Lingxiao Huang\footnotemark[2]
\and Jian Li\footnotemark[2]}
\begin{document}

\maketitle

\begin{abstract}
In recent years, the capacitated center problems have attracted a lot of research interest. Given a set of vertices $V$, we want to find a subset of vertices $S$, called centers, such that 
the maximum cluster radius is minimized. Moreover, each center in $S$ should satisfy some capacity constraint, which could be an upper or lower bound on the number of vertices it can serve. Capacitated $k$-center problems with one-sided bounds (upper or lower) have been well studied in previous work, and a constant factor approximation was obtained.

We are the first to study the capacitated center problem with both capacity lower and upper bounds (with or without outliers). We assume each vertex has a uniform lower bound and a non-uniform upper bound. For the case of opening exactly $k$ centers, we note that a generalization of a recent LP approach can achieve constant factor approximation algorithms for our problems. Our main contribution is a simple combinatorial algorithm
for the case where there is no cardinality constraint on the 
number of open centers.
Our combinatorial algorithm is simpler and achieves better constant approximation factor compared to the LP approach.
\end{abstract}


\newpage
\section{introduction}
The $k$-center clustering is a fundamental problem in theoretical computer science and has numerous applications in a variety of fields.
Roughly speaking, given a metric space containing a set of vertices, the $k$-center problem asks for a subset of $k$ vertices, called centers,
such that the maximum radius of the induced $k$ clusters is minimized.
Actually $k$-center clustering falls in the umbrella of the general
{\em facility location} problems which have been extensively studied
in the past decades. Many operation and management problems can be modeled as facility location problems, and usually the input vertices and selected centers are also called ``clients" and ``facilities" respectively.
In this paper, we consider a significant generalization of
the $k$-center problem,
where each vertex is associated with a capacity interval;
that is, the cardinality of the resulting cluster centered at the vertex should satisfy the given lower and upper capacity bounds (the formal definition is shown in Section~\ref{sec-pre}).
In addition, we also consider the case where a given number of vertices
may be excluded as outliers.

Besides being a natural combinatorial problem on its own,
the $k$-center problem with both capacity upper and lower bounds
is also strongly motivated by several realistic issues
raised in a variety of application contexts.

\begin{enumerate}
	\item
	In the context of facility location,
	each open facility may be constrained by
	the maximum number of clients it can serve.
	The capacity lower bounds also come naturally,
	since an open facility needs to serve at least a certain number of
	clients in order to generate profit.
	\item
	Several variants of the $k$-center clustering have been used
	in the context of preserving privacy in publication of sensitive data
	(see e.g., \citep{DBLP:journals/talg/AggarwalPFTKKZ10,li2010clustering,sweeney2002k}).
	In such applications, it is important to have
	an appropriate lower bound for the cluster sizes,
	in order to protect the privacy to certain extent
	(roughly speaking, it would be relatively easier for an adversary to identify the clients inside a too small cluster). 	
	\item
	Consider the scenario where the data is distributed over the nodes in a large network. We would like to choose $k$ nodes as central servers,
	and aggregate the information of the entire network.
	We need to minimize the delay (i.e., minimize the cluster radius),
	and at the same time consider the balancedness, for the obvious reason that the machines receiving too much data could be the bottleneck of the system and the ones receiving too little data is not sufficiently energy-efficient~\citep{dick2015data}. 	
\end{enumerate}

Our problem generalizes the classic $k$-center problem as
well as many important variants studied by previous authors.
The optimal approximation results for the classic
$k$-center problem appeared in the 80's:  \cite{gonzalez1985clustering} and \cite{hochbaum1985best} provided a $2$-approximation
in a metric graph; moreover, they proved that any approximation ratio $c<2$ would imply $P=NP$. The first study on capacitated (with only upper bounds) $k$-center clustering is due to \cite{barilan1993allocate} who provided a $10$-approximation algorithm for uniform capacities (i.e., all the upper bounds are identical). Further, \cite{khuller2000capacitated} improved the approximation ratio to be $6$ and $5$ for hard and soft uniform capacities, respectively.
\footnote{
	We can open more than one copies of a facility in the same node
	in the soft capacity version. But in the hard capacity version,
	we can only open at most one copy.
	}
The recent breakthrough for non-uniform (upper) capacities is due to \cite{cygan2012lp}. They developed the first constant approximation algorithm based on LP rounding, though their approximation ratio is about hundreds.
Following this work, \cite{an2015centrality} provided an approximation algorithm with the much lower approximation ratio $9$.
On the imapproximability side, it is impossible to achieve an approximation ratio lower than $3$ for non-uniform capacities unless $P=NP$~\citep{cygan2012lp}.

For the ordinary $k$-center with outliers, a $3$-approximation algorithm was obtained by \cite{charikar2001algorithms}.
\cite{kociumaka2014constant} studied $k$-center with non-uniform upper capacities and outliers, and provided a $25$-approximation algorithm.

$k$-center clustering with lower bounds on cluster sizes was first studied in the context of privacy-preserving data management~\citep{sweeney2002k}. \cite{DBLP:journals/talg/AggarwalPFTKKZ10} provided a $2$-approximation and a $4$-approximation for the cases without and with outliers, respectively.
Further, \cite{ene2013fast} presented a near linear time $(4+\epsilon)$-approximation algorithm in constant dimensional Euclidean space. Note that both \citep{DBLP:journals/talg/AggarwalPFTKKZ10,ene2013fast} are only for uniform lower bounds. Recently,
\cite{ahmadian2016approximation} provided a $3$-approximation and a $5$-approximation for the non-uniform lower bound case without and with outliers.

\noindent\textbf{Our main results.} To the best of our knowledge, we are the first to study the $k$-center with both capacity lower and upper bounds (with or without outliers).
Given a set $V$ of $n$ vertices, we focus on the case where the capacity of each vertex $u\in V$ has a uniform lower bound $L_u=L$ and a non-uniform upper bound $U_u$. Sometimes, we consider a generalized supplier version where we are only allowed to open centers among a facility set $\calF$, see Definition \ref{def:ksupp} for details.
We mainly provide first constant factor approximation algorithms for the following variants, see Table \ref{tab:result} for other results.

\begin{enumerate}
\item \unisoftout\ (Section \ref{sec:simple}): In this problem, both the lower bounds and the upper bounds are uniform, i.e., $L_u=L,U_u=U$ for all $u\in V$. The number of open centers can be arbitrary, i.e., there is no requirement to choose exactly $k$ open centers.
Moreover, we allow multiple open centers at a single vertex $u\in V$ (i.e., soft capacity).
We may exclude $n-p$ outliers.
We provide the first polynomial time combinatorial algorithm
which can achieve an approximate factor of $5$.
\item \nonuniout (Section \ref{sec:generalized}):
In this problem, the lower bounds are uniform, i.e., $L_u=L$ for all $u\in V$, but the upper bound can be nonuniform.
The number of open centers can be arbitrary. We may exclude $n-p$ outliers.
We provide the first polynomial time combinatorial
$11$-approximation for this problem.
\item \knonuni\ (Section \ref{sec:kcenter}): In this problem,
we would like to open exactly $k$ centers, such that the maximum cluster
radius is minimized.
All vertices have the same capacity lower bounds, i.e., $L_u=L$ for all $u\in V$. But the capacity upper bounds may be nonuniform, i.e.,
each vertex $u$ has an individual capacity upper bound $U_u$.
Moreover, we do not exclude any outlier.
We provide the first polynomial time
$9$-approximation algorithm for this problem, based on LP rounding.
\item \knonuniout\ (Section \ref{sec:kcenter}): This problem is the outlier version of the \knonuni\ problem. The problem setting is exactly the same except that we
can exclude $n-p$ vertices as outliers.
We provide a polynomial time $25$-approximation algorithm for this problem.

\end{enumerate}

\begin{table*}[htbp]
  \centering
  \begin{tabular}{|*{4}{c|}}
\hline
 \multicolumn{2}{|c|}{\multirow{2}{*}{Problem Setting}} &
 \multicolumn{2}{c|}{Approximation Ratio}  \\
  \cline{3-4}
  \multicolumn{2}{|c|}{} & Center Version & Supplier Version  \\
\hline\cline{1-4}
  \multirow{4}{*}{Without $k$ Constraint} &
  ($L$,$U$,soft-$\emptyset$,$p$) & 5 & 5 \\\cline{2-4}
  &($L$,$U$,$\emptyset$,$p$) & 10 & 23\\\cline{2-4}
  &($L$,$\{U_u\}$,soft-$\emptyset$,$p$) & 11 & 11 \\\cline{2-4}
  &($L$,$\{U_u\}$,$\emptyset$,$p$) & 11 & 25 \\
  \hline \cline{1-4}
 \multirow{6}{*}{With $k$ Constraint} &
  ($L$,$U$,$k$) & 6 & 9 \\  \cline{2-4}
  & ($L$,$\{U_u\}$,$k$) & 9 & 13 \\  \cline{2-4}
  &($L$,$U$,soft-$k$,$p$)& 13 & 13\\ \cline{2-4}
  &($L$,$U$,$k$,$p$)& 23 & 23 \\\cline{2-4}
  &($L$,$\{U_u\}$,soft-$k$,$p$) & 25 & 25 \\\cline{2-4}
  &($L$,$\{U_u\}$,$k$,$p$)  & 25 & 25 \\\hline
  \end{tabular}
  \vspace{0.1cm}
  \caption{A summarization table for our results in this paper.}
  \label{tab:result}
\end{table*}

\vspace{-0.5cm}

\topic{Our main techniques.}
In Section \ref{sec:greedy}, we consider the first two variants which allow to open arbitrarily many centers.
We design simple and faster
combinatorial algorithms
which can achieve better constant approximation ratios compared to the LP approach.
For the simpler case \unisoftout, we construct a data structure for all possible open centers. We call it a \emph{core-center tree} (\CCT).
Our greedy algorithm mainly contains two procedures. The first procedure \emph{pass-up}
greedily assigns vertices to open centers from the leaves of \CCT\ to the root. After this procedure, there may exist some unassigned vertices around the root. We then introduce the second procedure called \emph{pass-down}, which assigns these vertices in order by finding an \emph{exchange route} each time. For the more general case \nonuniout, our greedy algorithm is similar but somewhat more subtle.
We still construct a \CCT\ and run the pass-up procedure.
Then we obtain an open center set $F$, which may contain redundant centers. However, since we deal with hard capacities and outliers, we need to find a non-redundant open center set which is not 'too far' from $F$ (see Section \ref{sec:generalized} for details) and have enough total capacities. Then by a pass-down procedure, we can assign enough vertices to their nearby open centers.

In Section \ref{sec:algtree} and \ref{sec:kcenter}, we consider the last two variants which require to open exactly $k$ centers.
We generalized the LP approach developed for $k$-center with only
capacity upper bounds~\citep{an2015centrality,kociumaka2014constant} and obtain constant approximation schemes for two-sided capacitated bounds. Due to the lack of space, we defer many details and proofs to a full version.

\subsection{Other Related Work}

The classic $k$-center problem is quite fundamental and has been generalized
in many ways, to incorporate various constraints motivated by
different application scenarios.
Recently, \cite{fernandes2016improved} also provided constant approximations for the fault-tolerant capacitated $k$-center clustering.
\cite{chen2016matroid} studied the matroid center problem
where the selected centers must form an independent set of a given matroid,
and provided constant factor approximation algorithms (with or without outliers).

There is a large body of work on approximation algorithms for the facility location and $k$-median problems (see e.g., \citep{arya2004local,charikar2005improved,charikar1999constant,guha1999greedy,jain2002new,jain2001approximation,korupolu2000analysis,li20131,li2016approximating}). Moreover, \cite{dick2015data} studied multiple balanced clustering problems with uniform capacity intervals, that is, all the lower (upper) bounds are identical; they also consider the problems under the stability assumption.

\subsection{Preliminaries}
\label{sec-pre}
In this paper, we usually work with the following more general problem, called the capacitated $k$-supplier problem.
It is easy to see it generalizes the  capacitated $k$-center problem.
The formal definition is as follows.

\begin{definition}
\label{def:ksupp}
(Capacitated $k$-supplier with two-sided bounds and outliers) Suppose that we have
\begin{enumerate}
\item Two integers $k,p\in \Z_{\geq 0}$;
\item A finite set $\calC$ of clients, and a finite set $\calF$ of facilities;
\item A symmetric distance function $\dist: (\calC\cup \calF)\times (\calC\cup \calF)\rightarrow \R_{\geq 0}$ satisfying the triangle inequality;
\item A capacity interval $[L_u,U_u]$ for each facility $u\in \calF$, where $L_u,U_u\in \Z_{\geq 0}$ and $L_u\leq U_u$.
\end{enumerate}
Our goal is to find a client set $C\subseteq \calC$ of size at least $p$, an open facility set $F\subseteq \calF$ of size exactly $k$, and a function $\phi: C\rightarrow F$ satisfying that $L_u\leq |\phi^{-1}(u)|\leq U_u$ for each $u\in F$, which minimize the maximum cluster radius
$\max_{v\in C} \dist(v, \phi(v))$. If the maximum cluster radius is at most $r$, we call the tuple $(C,F,\phi)$ a distance-$r$ solution.
\end{definition}

We denote the above problem as \ksupbnonuniout.
If the lower bounds are uniform ($L_u=L$ for all $u\in \calF$),
we use $L$ in place $\{L_u\}$, e.g., \ksupnonuni.
Similarly, if the upper bounds are uniform ($U_u=U$ for all $u\in \calF$),
we use $U$ in place $\{U_u\}$.
If there is no constraint to open $k$ centers,
we use $\emptyset$ to replace $k$,
 e.g., \supnonuniout. Also note that the capacitated $k$-center problem with two-sided bounds and outliers is a special case by letting $V=\calC=\calF$, we denote it the \kbnonuniout\ problem.

 By the similar approach of \cite{kociumaka2014constant}, we can reduce the \ksupbnonuniout\ problem to a simpler case. We first introduce some definitions.
 \begin{definition}
 \label{def:induce}
 (Induced distance function) We say the distance function $\dist_G:(\calC\cup\calF)\times(\calC\cup\calF)\rightarrow \bbbr_{\geq 0}$ is induced by an undirected unweighted connected graph $G=(\calC\cup\calF,E)$ if
 \begin{enumerate}
 \item $\forall (u,v)\in E$, we have $u\in\calF$ and  $v\in\calC$.
 \item $\forall a_1,a_2\in\calC\cup\calF$, the distance $\dist_G(a_1,a_2)$ between $a_1$ and $a_2$ equals to the length of the shortest path from $a_1$ to $a_2$.
 \end{enumerate}
 \end{definition}

\begin{definition}
\label{def:bipartite}
(Induced \ksupbnonuniout\ instance)
An \ksupbnonuniout\ instance is called an induced \ksupbnonuniout\ instance if the following properties are satisfied:
\begin{enumerate}
\item The distance function $\dist_G$ is induced by an undirected connected graph $G=(\calC\cup\calF,E)$.
\item The optimal capacitated $k$-supplier value is at most 1.
\end{enumerate}
Moreover, we say this instance is induced by $G$.

\end{definition}

When the graph of interest $G$ is clear from the context, we will use $\dist$ instead of $d_G$ for convenience.
We then show a reduction from solving the generalized  \ksupbnonuniout\ problem to solving induced \ksupbnonuniout\ instances by Lemma \ref{lm:bipartite}. The proof can be found in Appendix \ref{app:lemma4}.

\begin{lemma}
\label{lm:bipartite}
Suppose we have a polynomial time algorithm $A$ that takes as input any induced \ksupbnonuniout\ instance, and outputs a distance-$\rho$ solution. Then, there exists a $\rho$-approximation algorithm for the \ksupbnonuniout\ problem with polynomial running time.
\end{lemma}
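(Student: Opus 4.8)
I would adapt the threshold-graph (bottleneck) reduction of \cite{kociumaka2014constant}. First observe that the optimal radius $r^\star$ of any \ksupbnonuniout\ instance equals one of the $O(|\calC|\cdot|\calF|)$ pairwise distances $\dist(v,u)$ with $v\in\calC$, $u\in\calF$. Hence it suffices to give a polynomial-time procedure $B$ such that $B(r)$ returns a feasible solution of radius at most $\rho r$ whenever $r\geq r^\star$; we then run $B$ on each candidate value $r$ in the sorted list of those $O(|\calC|\cdot|\calF|)$ distances, discard any output that is not a genuine feasible solution (checkable in polynomial time), and output the verified solution of smallest radius. Since $B(r^\star)$ succeeds, the output has radius at most $\rho r^\star=\rho\cdot\mathrm{OPT}$.

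To build $B(r)$: form the threshold bipartite graph $G_r=(\calC\cup\calF,E_r)$ with $E_r=\{uv: u\in\calF,\ v\in\calC,\ \dist(u,v)\leq r\}$, and let $\dist_{G_r}$ be its (unweighted) shortest-path metric. Two simple facts drive everything. (a) If $v\in\calC$, $u\in\calF$ and $\dist(v,u)\leq r$, then $uv\in E_r$, so $\dist_{G_r}(v,u)=1$. (b) For any $a_1,a_2$ lying in a common connected component of $G_r$, a shortest $G_r$-path between them uses $\dist_{G_r}(a_1,a_2)$ edges, each joining two points at original distance $\leq r$, so the triangle inequality gives $\dist(a_1,a_2)\leq r\cdot\dist_{G_r}(a_1,a_2)$. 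By (a), when $r\geq r^\star$ an optimal solution $(C^\star,F^\star,\phi^\star)$ becomes a feasible solution of radius at most $1$ for the problem in which $\dist$ is replaced by $\dist_{G_r}$ (all capacities, the bound $|C^\star|\geq p$, and $|F^\star|=k$ are unchanged); moreover each facility of $F^\star$ serves only clients at distance $\leq r^\star\leq r$, hence clients in its own component of $G_r$. So, component by component, the $G_r$-instance has optimal value at most $1$.

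The remaining point — and I expect it to be the main obstacle — is that $G_r$ may be disconnected, so it is not itself an induced instance, and the global constraints ``$|F|=k$'' and ``$|C|\geq p$'' must be split across components. Decompose $G_r$ into connected components $H_1,\dots,H_m$. For each $i$ and each pair of integers $k_i\in[0,|\calF\cap H_i|]$, $p_i\in[0,|\calC\cap H_i|]$ (polynomially many in total) I would run $A$ on the \ksupbnonuniout\ instance induced by $H_i$ with parameters $(k_i,p_i)$, keep its output only when it is a genuine distance-$\rho$ solution, and read such an output as a local solution with exactly $k_i$ open facilities and at least $p_i$ covered clients inside $H_i$. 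Since the restriction of $(C^\star,F^\star,\phi^\star)$ to $H_i$ is, by the previous paragraph, a valid induced instance of value $\leq 1$ with parameters $k_i^\star=|F^\star\cap H_i|$ and $p_i^\star=|C^\star\cap H_i|$, algorithm $A$ succeeds at least on this (unknown) optimal split, which obeys $\sum_i k_i^\star=k$ and $\sum_i p_i^\star\geq p$. A knapsack-style dynamic program over the components then finds \emph{some} choice $(k_i,p_i)_i$ with $\sum_i k_i=k$, $\sum_i p_i\geq p$, and a verified local solution for every $i$. Because the $H_i$ partition $\calC\cup\calF$, concatenating these local solutions yields a facility set of size exactly $k$, a client set of size at least $p$, and an assignment respecting every $[L_u,U_u]$; by fact (b) inside each component, its radius in the original metric is at most $\rho r$. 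This is the desired $B(r)$, and the whole algorithm runs in polynomial time since $A$ is polynomial and is invoked only polynomially many times.
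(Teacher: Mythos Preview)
Your proposal is correct and follows essentially the same approach as the paper's own proof: enumerate the $O(|\calC|\cdot|\calF|)$ candidate radii, build the threshold bipartite graph, decompose into connected components, invoke $A$ on each component for every relevant $(k_i,p_i)$ pair, and combine the per-component results via a knapsack-style dynamic program over $(k,p)$. The paper spells out the DP recurrence explicitly while you leave it at ``knapsack-style,'' but the argument and all the key observations (including the scaling inequality $\dist\leq r\cdot\dist_{G_r}$ and the check that $A$'s output is genuinely feasible) are the same.
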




By Lemma \ref{lm:bipartite}, we focus on designing an algorithm $A$ for different variants of the induced \ksupbnonuniout\ instances.

\section{Capacitated Center with Two-Sided Bounds and Outliers}
\label{sec:greedy}

In this section, we consider the version that the number of open centers can be arbitrary. By the LP approach in Section \ref{sec:kcenter} and enumerating the number of open centers, we can achieve approximation algorithms for different variants in this case. However, the approximation factor is not small enough. In this section, we introduce a new greedy approach in order to achieve better approximation factors. Since our algorithm is combinatorial, it is easier to be implemented and saves the running time compared to the LP approach.

\subsection{Core-center tree (\CCT)}
\label{sec:cct}
Consider the \supnonuniout\ problem. By Lemma \ref{lm:bipartite}, we only need to consider induced \supnonuniout\ instances induced by an undirected unweighted connected graph $G=(\calC\cup \calF,E)$. We first propose a new data structure called \emph{core-center tree (\CCT)} as follows.

\begin{definition}
\label{def:CCT}
(Core-center tree (\CCT))
Given an induced \supnonuniout\ instance induced by an undirected unweighted connected graph $G=(\calC\cup \calF,E)$, we call a tree $T=(\calF,E_T)$ a core-center tree(\CCT) if the following properties hold.
\begin{enumerate}
\item For each edge $(u,u')\in E_T$, we have $\dist_G(u,u')\leq 2$;
\item Suppose the root of $T$ is at layer 0. Denote $I$ to be the set of vertices in the even layers of $T$. We call $I$ the core-center set of $T$. For any two distinct vertices $u,u'\in I$, we have $\dist_G(u,u')\geq 3$.
\end{enumerate}
\end{definition}

\begin{lemma}
\label{lm:cct}
Given an induced \supnonuniout\ instance induced by an undirected unweighted connected graph $G=(\calC\cup \calF,E)$, we can construct a \CCT\ in polynomial time.
\end{lemma}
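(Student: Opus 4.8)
The plan is to build the \CCT\ greedily, mimicking the classical greedy procedure that produces a maximal set of pairwise far-apart points, but layering it into a tree using the graph distance $\dist_G$. First I would construct the core-center set $I$: process the facilities of $\calF$ in an arbitrary order, and add a facility $u$ to $I$ whenever $\dist_G(u,u')\ge 3$ for every $u'$ already placed in $I$. This directly guarantees property (2) of Definition~\ref{def:CCT}. The key observation — which I would prove as the engine of the argument — is that $I$ is a \emph{dominating set at distance $2$}: for every $u\in\calF$ there is some $v\in I$ with $\dist_G(u,v)\le 2$. Indeed, if some $u$ had $\dist_G(u,v)\ge 3$ for all $v\in I$, then $u$ itself would have been added to $I$ when it was processed (or is already in $I$), a contradiction. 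Note that since edges of $G$ go only between $\calC$ and $\calF$, and $I\subseteq\calF$, any two facilities are at even $\dist_G$-distance, so ``$\dist_G(u,v)\le 2$'' really means $\dist_G(u,v)\in\{0,2\}$, i.e.\ $u=v$ or $u,v$ share a common client neighbor.

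Next I would turn $I$ into the even layers of a tree. Pick any $r\in I$ as the root (layer $0$). The idea is to connect each core-center to a previously-reached one via a length-$2$ hop. More carefully, I would grow the tree in rounds: layer $2i$ already chosen, and for each not-yet-placed $u\in I$ that is at $\dist_G$-distance exactly $2$ from some already-placed core-center $w$ in layer $2i$, attach $u$ as a grandchild of $w$ through the intermediate client: put the common client neighbor at layer $2i+1$ as a child of $w$, and put $u$ at layer $2i+2$ as a child of that client. Every edge so created has $\dist_G$-endpoints at distance $1\le 2$, so property (1) holds. Because $G$ is connected and $I$ is a distance-$2$ dominating set, a straightforward induction on $\dist_G(r,\cdot)$ shows every core-center gets reached: if $u\in I$ is not yet placed, take a shortest path from $r$ to $u$ in $G$, walk along it to the last facility $u'$ before $u$ that is already within distance $2$ of a placed core-center (the start $r$ qualifies), and argue the next facility on the path, or a core-center dominating it, is placed in the following round. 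Finally, to make $T=(\calF,E_T)$ span \emph{all} of $\calF$ and not just $I$ together with the chosen intermediate clients — wait, $T$ is required to be a tree on vertex set $\calF$, not $\calC\cup\calF$. So I would instead only use facilities as tree vertices and edges of weight $\le 2$ directly between facilities: attach each non-root core-center $u$ to a core-center $w$ already in the tree with $\dist_G(u,w)=2$ as a direct parent-child edge (valid by property (1)), giving the even-layer structure, and then attach every remaining facility $u\in\calF\setminus I$ as a child (odd layer) of some $v\in I$ with $\dist_G(u,v)\le 2$, which exists by domination. One then checks the even layers are exactly $I$ — this needs the attachment of $\calF\setminus I$ facilities to land them in odd layers only, which holds because their parents are all in $I$ (layer $0$ if attached to the root's component appropriately); a small bookkeeping argument ensures core-centers are never accidentally placed at odd depth.

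The main obstacle I anticipate is this last point: simultaneously enforcing ``every edge has $\dist_G\le 2$'' and ``the even layers are \emph{exactly} $I$'' while $T$ ranges only over $\calF$. Connecting core-centers to core-centers keeps all of $I$ at even depth, but a facility $u\notin I$ attached to a core-center sits at an odd layer — good — yet I must make sure no facility outside $I$ is forced to an even layer and no core-center ends up odd; the cleanest fix is a two-phase construction (first a spanning tree on $I$ using weight-$2$ edges, established to exist via the domination-plus-connectivity argument above, then hang each leftover facility off its dominating core-center), and then verify by induction on depth that depth-parity coincides with membership in $I$. Polynomial running time is immediate: computing all pairwise $\dist_G$ distances is one BFS per vertex, the greedy selection of $I$ is a single pass, and both tree-building phases are linear in $|\calF|$ given the distance matrix. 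I would conclude by stating that the resulting $T$ satisfies both properties of Definition~\ref{def:CCT}, completing the proof.
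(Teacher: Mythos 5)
There is a genuine gap, and it is in the step you flagged yourself but then waved away. You correctly observe that since every edge of $G$ joins $\calF$ to $\calC$, all facility--facility distances are even; and you choose $I$ greedily so that distinct core-centers satisfy $\dist_G(u,u')\geq 3$, hence in fact $\dist_G(u,u')\geq 4$. But your final ``two-phase'' construction asks for a spanning tree on $I$ whose parent--child edges join core-centers at distance exactly $2$ --- such edges do not exist between any two distinct members of your $I$, so the first phase is vacuous whenever $|I|\geq 2$, and the ``domination-plus-connectivity'' argument does not rescue it: the auxiliary graph on $I$ with edges between core-centers at distance at most $2$ is edgeless. The structure that actually works is an \emph{alternating} tree: even layers are core-centers, odd layers are facilities outside the core-center set that lie within distance $2$ of their parent, and the next even layer hangs off those odd-layer facilities, so that consecutive core-centers along the tree are at distance up to $4$ (not $2$). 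Your earlier idea of routing through a common client fails for the reason you noted ($T$ must have vertex set $\calF$), but the fix is to route through an intermediate \emph{facility}, not to force direct core-center-to-core-center edges.

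The paper's proof sidesteps your whole difficulty by not fixing $I$ in advance. It forms the graph $G^2$ on $\calF$ (edges between facilities at $\dist_G\leq 2$), notes $G^2$ is connected, and runs a modified BFS that builds two layers at a time: whenever a facility is placed at an even layer, \emph{all} of its unscanned $G^2$-neighbors are immediately claimed as its children in the next odd layer. This greedy sweep is what guarantees that the vertices ending up at even layers form an independent set of $G^2$, i.e.\ are pairwise at $\dist_G\geq 3$; the core-center set is then simply defined to be those even-layer vertices. If you want to keep your pre-selected maximal $I$, you must both replace the distance-$2$ core-center edges by two-step connections through non-core facilities and supply the ``bookkeeping argument'' that every facility and every core-center can be placed at a layer of the right parity --- neither of which your proposal currently does, and the second is not a triviality since a non-core facility adjacent in $G^2$ to two far-apart core-centers must be attached so as to land at an odd layer.
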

\begin{proof}
We first construct a graph $G^2$ on $\calF$ as follows: for each pair $u_1,u_2$ in $\calF$ with distance at most 2, we add an edge $(u_1,u_2)$ in $G^2$. Observe that $G^2$ is connected by Definition \ref{def:induce}. We then construct a spanning tree $T$ of $G^2$ satisfying that all facilities in even layers form an independent set of $G^2$. It is not hard to verify that such a tree is a \CCT. We build $T$ as follows. The above property directly holds from our construction.
\begin{enumerate}
\item Initially, we randomly pick a facility $u\in \calF$ as the root of $T$. We then pick all adjacent facilities of $u$ in $G^2$ as its children (layer 1).
\item By a modified BFS, we continue to construct layer 2 and layer 3. Each time we pick a facility $w$ in layer 1. We iteratively pick an adjacent facility $w'$ of $w$ in $G^2$ which has not been scanned as a child of $w$. After we append $w'$ to layer 2, we immediately pick all unscanned neighbors of $w'$ in $G^2$ as the children of $w'$ (append them in layer 3).
\item We then iteratively construct $T$ until all facilities in $\calF$ have been scanned. Each iteration, we build two consecutive layers: an odd layer and an even layer.
\end{enumerate}
\end{proof}

For any $u\in \calF$, denote $N_G[u]=\{v\in \calC: (u,v)\in E\}$ to be the collection of all neighbors of $u\in \calF$.
\footnote{If $u\in \calC$ is also a client, then $u\in N_G[u]$.}
W.l.o.g., we assume that $U_u\leq |N_G(u)|$ for every facility $u\in \calF$ in this section.
In fact, we can directly delete all $u\in \calF$ satisfying that $|N_G[u]|< L$ from the facility set $\calF$, since $u$ can not be open in any optimal feasible solution.
\footnote{
If this deletion causes the induced graph unconnected, similar to Lemma 6 in \citep{kociumaka2014constant}, we divide the graph into different connected components, and consider each smaller induced instance based on different connected components.
}
Otherwise if $L\leq |N_G[u]|< U_u$, we set $U_u\leftarrow\min\{U_u,|N_G[u]|\}$, which has no influence on any optimal feasible solution of the induced \supnonuniout\ instance. The following lemma gives a useful property of \CCT.

\begin{lemma}
\label{lm:xi}
Given an induced \supnonuniout\ instance induced by an undirected unweighted connect graph $G=(\calC\cup \calF,E)$, and a core-center tree $T=(\calF,E_T)$, suppose $I$ is the core-center set of $T$. Then, we can construct a function $\xi:\calC\rightarrow \calF$ satisfying the following properties in polynomial time.
\begin{enumerate}
\item For all $v\in\calC$, we have $(\xi(v), v)\in E$;
\item For all $ u\in I$, we have $|\xi^{-1}(u)|\geq L$.
\end{enumerate}
\end{lemma}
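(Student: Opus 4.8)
The plan is to build $\xi$ by a two-phase assignment that first guarantees the lower-bound condition on core-centers, then handles the leftover clients arbitrarily. The key structural fact I want to exploit is property 2 of the \CCT: the core-center set $I$ is "$3$-separated" in $\dist_G$, so the neighborhoods $N_G[u]$ for $u\in I$ are pairwise disjoint (if a client $v$ were adjacent to two distinct $u,u'\in I$, then $\dist_G(u,u')\le 2$, contradicting $\dist_G(u,u')\ge 3$). Thus the bipartite "adjacency" graph between $I$ and $\calC$ restricted to $I$ is a disjoint union of stars, and I can treat each core-center independently.

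First I would set up, for each $u\in I$, the candidate client set $N_G[u]$; by the disjointness just noted these sets are pairwise disjoint, and by the normalization assumption in this section $|N_G[u]|\ge U_u\ge L$ (we deleted every facility with $|N_G[u]|<L$, and otherwise $U_u\le|N_G(u)|$, with $L\le U_u$). So I can pick an arbitrary subset $A_u\subseteq N_G[u]$ of size exactly $L$ and define $\xi(v)=u$ for every $v\in A_u$; this is consistent because the $A_u$ are disjoint. This already secures property 2: $|\xi^{-1}(u)|\ge|A_u|=L$ for all $u\in I$. Second, for every client $v$ not yet assigned — i.e., $v\notin\bigcup_{u\in I}A_u$ — I need to send it to \emph{some} adjacent facility to meet property 1. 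Here I invoke connectivity of $G$ from Definition~\ref{def:induce} together with the bipartite structure (edges only go between $\calF$ and $\calC$): every client $v\in\calC$ has at least one neighbor in $\calF$ (an isolated client would make $G$ disconnected, since $|\calC\cup\calF|\ge 2$ in any nontrivial instance), so pick any such neighbor and assign $v$ to it. Both phases are clearly polynomial time — one pass over $I$ and one pass over $\calC$.

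The only real obstacle is making sure the two phases are \emph{simultaneously} realizable, i.e., that reserving $L$ clients for each $u\in I$ never runs out of clients and never conflicts across different core-centers; this is exactly what the $3$-separation of $I$ buys us, which is why property 2 of the \CCT\ is stated the way it is. A secondary point to check is that the normalization conventions established just before the lemma ($|N_G[u]|\ge L$ for all surviving $u\in\calF$, in particular for $u\in I$, and $G$ remaining connected after the deletion) are in force, so that $A_u$ of size $L$ exists; I would note this explicitly. Nothing here requires the upper bounds $U_u$ — property 1 and property 2 constrain only adjacency and lower bounds — so the construction is genuinely simple, and the upper-bound bookkeeping is postponed to the later pass-up/pass-down procedures that will rebalance this crude $\xi$.
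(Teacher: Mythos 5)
Your proposal is correct and follows essentially the same route as the paper: exploit the $3$-separation of $I$ to get pairwise-disjoint neighborhoods $N_G[u]$, use the normalization $|N_G[u]|\ge U_u\ge L$ to secure property 2, and assign all remaining clients to an arbitrary adjacent facility for property 1 (the paper assigns all of $N_G[u]$ to $u$ rather than a size-$L$ subset, an immaterial difference).
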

\begin{proof}
 Firstly, for each pair $u\in I$ and $v\in N_G[u]$, we define $\xi(v)=u$. We can make this mapping since for each pair $u_1,u_2\in I$, we have $N_G[u_1]\cap N_G[u_2]=\emptyset$ by Definition \ref{def:CCT}. For the rest clients $v\in\calC$, we define $\xi(v)$ to be an arbitrary facility $u\in\calF$ adjacent to $v$.

By the above construction, the first constraint is satisfied naturally. The second constraint is satisfied by the fact that $|\xi^{-1}(u)|\geq |N_G[u]|\geq L$ for all $u\in I$.
\end{proof}
\subsection{A Simple Case: \supunisoftout}
\label{sec:simple}

We first consider a simple case where the capacity bounds (upper and lower) are uniform and soft.
In this setting, we want to find an open facility set $F=\{u_i\mid u_i\in \calF\}_i$. Note that we allow multiple open centers in $F$. We also need to find an assignment function $\phi:\calC\rightarrow F$, representing that we assign every client $v\in \calC$ to facility $\phi(v)$.
The main theorem is as follows.

\begin{theorem}
\label{thm:5app}
(main theorem)
There exists a 5-approximation polynomial time algorithm for the \supunisoftout\ problem.
\end{theorem}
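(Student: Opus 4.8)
The plan is to carry out the two-phase combinatorial greedy sketched in the introduction. By Lemma~\ref{lm:bipartite} it suffices to handle an \emph{induced} \supunisoftout\ instance induced by a connected graph $G=(\calC\cup\calF,E)$, so that every client lies at distance exactly $1$ from the facility serving it in any optimum. First I invoke Lemma~\ref{lm:cct} to build a \CCT\ $T=(\calF,E_T)$ with core-center set $I$, and Lemma~\ref{lm:xi} to build $\xi\colon\calC\to\calF$ with $(\xi(v),v)\in E$ for all $v$ and $|\xi^{-1}(u)|\ge L$ for all $u\in I$. Three facts will be used throughout: consecutive nodes of $T$ are within $G$-distance $2$; distinct core-centers are at $G$-distance $\ge 3$, so the neighbourhoods $N_G[u]$ ($u\in I$) are pairwise disjoint and any client adjacent to one core-center is at distance $\ge 2$ from every other core-center; and by the construction of $\xi$, a client $v$ with $\xi(v)\notin I$ is adjacent to no core-center at all.

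\emph{Phase 1 (pass-up).} Process the nodes of $T$ from the deepest layer to the root. At a node $w$, form its pool from $\xi^{-1}(w)$ together with all residual clients pushed up from the children of $w$. If the pool has size $a\ge L$, split it into copies opened at $w$ with loads in $[L,U]$ and a residual of size $r<L$ --- such a split exists because with $t=\lfloor a/L\rfloor$ one has $[tL,tU]\cap(a-L,a]\neq\emptyset$ --- and when $w$ is a core-center I additionally \emph{insist that the $r$ residual clients lie in $N_G[w]$}, which is possible since $|N_G[w]|\ge L>r$; assign the remaining pool clients to the new copies and push the residual to the parent of $w$. If $a<L$, push the whole pool up. Every core-center is opened (its pool contains $N_G[u]$, of size $\ge L$), and at the end only the root's residual --- fewer than $L$ clients, all in $N_G[\mathrm{root}]$ --- is left unassigned. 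The crucial claim is that every assigned client $v$ has $\dist(v,\phi(v))\le 5$: one shows $v$ climbs at most two tree-edges above $\xi(v)$ before being placed into a group. If $\xi(v)$ is at an odd layer then $v$ lies in no core-center neighbourhood, so it reaches the parent core-center $u_1$ of $\xi(v)$ at distance $\le 1+2=3$ and, being outside $N_G[u_1]$, is not among $u_1$'s residual and is assigned there. If $\xi(v)$ is at an even layer then $v\in N_G[\xi(v)]$; it can be pushed once more to the grandparent core-center $u^\ast$ at distance $\le 1+2+2=5$, and since $\dist(v,u^\ast)\ge 2$ it is again outside $u^\ast$'s residual and is assigned there. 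Either way the radius is at most $5$.

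\emph{Phase 2 (pass-down).} If Phase~1 already serves at least $p$ clients, output this solution and discard the rest as outliers. Otherwise, repeatedly pick an unassigned client $v_0\in N_G[\mathrm{root}]$ and search for an \emph{exchange route} $v_0,f_0,v_1,f_1,\dots,v_t,f_t$ with all $f_i$ open facilities, $(v_i,f_i)\in E$ for every $i$, $v_i\in\phi^{-1}(f_{i-1})$ for $i\ge1$, and $f_t$ owning a copy of load $<U$. Reassigning each $v_i$ to $f_i$ keeps every load in $[L,U]$, serves $v_0$, and moves every affected client to an adjacent facility, so the radius stays $\le5$. Each such step strictly reduces the number of unassigned clients, and each route is found by a BFS-type search over clients and open facilities, so iterating until $p$ clients are served terminates and both phases run in polynomial time.

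The main obstacle is to show that whenever fewer than $p$ clients are served an exchange route exists. I would argue by reachability: let $R$ be the set of open facilities reachable by route-prefixes from unassigned clients; if none reaches a copy with slack then every copy of every facility in $R$ has load $U$, and every open facility adjacent to a client that is either unassigned or served by $R$ lies in $R$, so the clients confined to $R$ cannot be repacked to serve strictly more. This must be contradicted against feasibility (some optimum serves $\ge p$ clients, all via adjacencies). The delicate point is that an optimum may open copies at facilities the algorithm left closed; handling it requires exploiting that the algorithm opens \emph{every} core-center, that $I$ is a maximal independent set of $G^2$ (hence every client is within distance $3$ of $I$), and that each $|N_G[u]|\ge L$, to show that no facility outside $R$ lets an optimum serve strictly more of the confined clients, so the optimum's surplus over our solution collapses. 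Pinning down this bookkeeping --- together with the $[L,U]$-packing arithmetic and the residual-placement rule of Phase~1 that makes the bound exactly $5$ --- is where the real work of the proof lies.
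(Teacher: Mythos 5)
Your setup (reduction to induced instances via Lemma~\ref{lm:bipartite}, the \CCT\ and $\xi$ from Lemmas~\ref{lm:cct} and~\ref{lm:xi}, and a pass-up phase with the distance-$5$ bound) matches the paper, and your distance analysis for pass-up is sound. But the proof has a genuine gap exactly where you flag it: you never establish that the algorithm ends up serving at least $p$ clients, i.e.\ that an exchange route exists whenever fewer than $p$ clients are served. Your proposed route to close it --- a reachability/Hall-type argument comparing the set $R$ of facilities reachable by route-prefixes against an unknown optimum that ``may open copies at facilities the algorithm left closed'' --- is not carried out, and it is also not the right tool here. The paper closes this step with two elementary observations that your write-up misses. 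First, pass-up loads every open copy to \emph{exactly} $L$, so it opens exactly $a=\lfloor|\calC|/L\rfloor$ copies of total capacity $aU$; since every facility in any optimum serves at least $L$ clients and at most $U$, the optimum opens at most $a$ facilities and hence $p\le aU$. So as long as some copy has slack, there is room, and pass-down stops only when $S=\emptyset$ or all copies are full, in which case $aU\ge p$ clients are served --- no comparison of client sets with the optimum is needed. Second, route existence is purely structural: all open copies sit at core-centers, the route follows the grandparent chain in the \CCT\ from the root $u^*$ down to the slack copy $u_j$, and at each intermediate core-center $w$ one needs only an unmarked client of $\xi^{-1}(w)$; since $|\xi^{-1}(w)|\ge L$ and at most $b<L$ clients are ever marked over all iterations (there are at most $b$ iterations), such a client always exists.

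Two further points in your version actively obstruct this simple argument. You require every client on an exchange route to move to an \emph{adjacent} open facility ($(v_i,f_i)\in E$); the paper instead lets a reassigned client $v_i\in\xi^{-1}(w_i)$ move to $\phi(v_{i+1})$, a facility at distance up to $5$, and the radius bound survives because $w_{i+1}$ is a grandchild of $w_i$. Insisting on adjacency is what forces you into the hard reachability analysis. Also, your pass-up allows copies to be loaded anywhere in $[L,U]$; this can reduce the number of opened copies below $\lfloor|\calC|/L\rfloor$, and then the total capacity of the opened copies is no longer guaranteed to be at least $p$, so the counting argument above would not go through. Loading each copy to exactly $L$ (deferring the use of the slack up to $U$ entirely to pass-down) is not an incidental choice but the ingredient that makes the feasibility argument a one-line count.
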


By Lemma \ref{lm:bipartite}, we only consider induced \supunisoftout\ instances. Given an induced \supunisoftout\ instance induced by an undirected unweighted connect graph $G=(\calC\cup \calF,E)$, recall that we can assume $|N_G[u]|\geq U_u\geq L$ for each $u\in\calF$. We first construct a \CCT\ $T=(\calF, E_T)$ rooted at node $u^*$, and a function $\xi:\calC\rightarrow \calF$ satisfying Lemma \ref{lm:xi}. For a facility set $P\subseteq \calF$, we denote $\xi^{-1}(P)=\bigcup_{u\in P}\xi^{-1}(u)$ to be the collection of clients assigning to some facility in $P$ by $\xi$.

Our algorithm mainly includes two procedures. The first procedure is called \emph{pass-up}, which is a greedy algorithm to map clients to facilities from the leaves of $T$ to the root. After the 'pass-up' procedure, we still leave some unassigned clients nearby the root. Then we use a procedure called \emph{pass-down} to allocate those unassigned clients by iteratively finding an \emph{exchange route}. In the following, we give the details of both procedures.

\topic{Procedure Pass-Up.} Assume that $|\calC|=aL+b$ for some $a\in \bbbn$ and $0\leq b\leq L-1$.
In this procedure, we will find an open facility set $F$ of size $a$. We also find an assignment function $\phi$ which assigns $aL$ clients to some nearby facility in $F$ except a client set $S\subseteq \calC$. Here, $S$ is a collection of $b$ clients in $\xi^{-1}(u^*)$ nearby the root $u^*$. Our main idea is to open facility centers from the leaves of \CCT\ $T$ to the root iteratively. During opening centers, we assign exactly $L$ 'close' clients to each center. This is the reason that there are $b$ unassigned clients after the whole procedure.

We then describe an iteration of pass-up. Assume that $I$ is the core-center set of $T$. At the beginning, we find a non-leaf vertex $u\in I$ satisfying that all of its grandchildren (if exists) are leaves. We denote $P\subseteq \calF$ to be the collection of all children and all grandchildren of $u$. In the next step, we  consider all unscanned clients in $\xi^{-1}(P)$,
\footnote{Here, unscanned clients are those clients that have not been assigned by $\phi$ before this iteration.}
and assign them to the facility $u$. We want that each center at $u$ serves exactly $L$ centers. However, there may exist one center at $u$ serving less than $L$ unscanned clients in $\xi^{-1}(P)$. We assign some clients in $\xi^{-1}(u)$ to this center such that it also serves exactly $L$ clients. After this iteration, we delete the subtree rooted at $u$ from $T$ except $u$ itself. 

Finally, the root $u^*$ will become the only remaining node in $T$. We open multiple centers at $u^*$, each serving exactly $L$ clients in $\xi^{-1}(u^*)$, until there are less than $L$ unassigned clients. See Algorithm \ref{alg:passup} for details. We have the following lemma.


\begin{algorithm}
  \caption{Pass-Up}
\label{alg:passup}
    \textbf{Input:} an induced \supunisoftout\ instance induced by $G=(\calC\cup \calF,E)$, a \CCT\ $T=(\calF,E_T)$, and a function $\xi:\calC\rightarrow \calF$; \\
    Initialize $S\leftarrow \calC$, $T'\leftarrow T$, $j\leftarrow 0$;\\
    \While{$u^*\in T'$}{
    If the root $u^*$ is the only node of $T'$, we let $u\leftarrow u^*$. Otherwise, arbitrarily pick a non-leaf vertex $u\in I$ in $T'$ whose all grandchildren (if exists) are leaves of $T'$ ;\\
        Denote the subtree of $T'$ rooted at $u$ by $\hat{T}$. Denote $P$ to be the collection of all facilities in $\hat{T}\setminus \{u\}$; \\
        Let $l\leftarrow |\xi^{-1}(P)\cap S|$. Assume that $l=t L+q$ for some $t\in \bbbn$ and $0\leq q\leq L-1$; \\
        Arbitrarily pick $L-q$ clients from $\xi^{-1}(u)$ to form a set $K$. Let $A\leftarrow (\xi^{-1}(P)\cap S)\cup K$;\\
        Let $u_{j+1}\leftarrow u,u_{j+2}\leftarrow u,\cdots,u_{j+t+1}\leftarrow u$;\\
        For each center $u_{j+i}$ ($1\leq i\leq t+1$), assign exactly $L$ clients $v\in A$ to $u_{j+i}$, i.e., let $\phi(v)=u_{j+i}$;\\
        Let $j\leftarrow j+t+1$, $S\leftarrow S\setminus A$, $T'\leftarrow (T'\setminus \hat{T})\cup \{u\}$;\\
        \If{$u=u^*$}{
        Let $l'=|\xi^{-1}(u)\cap S|$. Assume that $l'=t'L+q'$ for some $t\in \bbbn$ and $0\leq q\leq L-1$; \\
        Arbitrarily pick $t'L$ clients from $\xi^{-1}(u)\cap S$ to form a set $K'$; \\
        Let $u_{j+1}\leftarrow u,u_{j+2}\leftarrow u,\cdots,u_{j+t'}\leftarrow u$; \\
        For each center $u_{j+i}$ ($1\leq i\leq t'$), assign exactly $L$ clients $v\in K'$ to $u_{j+i}$, i.e., let $\phi(v)=u_{j+i}$; \\
        Let $j\leftarrow j+t'$, $S\leftarrow S\setminus K'$, $T'\leftarrow \emptyset$;\\
        }
    }
    \textbf{Output:} $F=\{u_1,u_2,\cdots,u_j\}$, $\phi:(\calC\setminus S) \rightarrow F$ and $S$.\\
\end{algorithm}

\begin{lemma}
\label{lm:passup}
Given an induced \supunisoftout\ instance induced by an undirected unweighted connect graph $G=(\calC\cup \calF,E)$, assume that $|\calC|=aL+b$ for some $a\in \bbbn$ and $0\leq b\leq L-1$. The output of Algorithm \ref{alg:passup} satisfies the following properties:
\begin{enumerate}
\item Each open facility $u_j\in F$ satisfies that $u_j\in I$, and $|F|=a$;
\item The unassigned client set $S\subseteq \xi^{-1}(u^*)$, and $|S|=b$;
\item For each facility $u_i\in F$, we have $|\phi^{-1}(u_i)|=L$.
\item For each client $v\in \calC\setminus S$, $\phi(v)$ is either $\xi(v)$, or the parent of $\xi(v)$ in $T$, or the grandparent of $\xi(v)$ in $T$. Moreover, we have $\dist_G(v,\phi(v))\leq 5$.
\end{enumerate}
\end{lemma}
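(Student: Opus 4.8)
The plan is to trace the execution of Algorithm~\ref{alg:passup} and derive the four claims from two bookkeeping invariants: (i) when a core-center $u\in I$ is selected for its general step, none of the clients in $\xi^{-1}(u)$ has yet been assigned, i.e.\ $\xi^{-1}(u)\subseteq S$ at that moment; and (ii) once a facility $f\ne u^*$ has been removed from $T'$, every client $v$ with $\xi(v)=f$ has already been assigned. Granting (i) and (ii), Properties~1--4 follow from elementary counting and triangle-inequality estimates, so I would dispatch those first and treat (i)--(ii) as the core of the argument.

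Assuming (i): in the general step at $u$ we have $\xi^{-1}(u)\subseteq S$ and $|\xi^{-1}(u)|\ge L\ge L-q$ by Lemma~\ref{lm:xi}, so $K$ exists and $K\subseteq S$; since $u\notin P$ and $\xi$ is a function, $K$ is disjoint from $\xi^{-1}(P)$, hence $|A|=l+(L-q)=(t+1)L$ and the $t+1$ copies of $u$ each get exactly $L$ clients, and similarly the $t'$ copies in the root step each get exactly $L$ of the $t'L$ clients in $K'$; as each client is assigned to at most one copy, this is Property~3. For Property~4, fix an assigned client $v$ and the iteration in which it was assigned: either $v$ lies in some $K$ or $K'$, in which case $\phi(v)$ is a copy placed at $\xi(v)$; or $v\in\xi^{-1}(P)$ for the $u$ processed in that iteration, in which case $\xi(v)\in P$ is a child or grandchild of $u$, so $\phi(v)$ is a copy placed at the parent or grandparent of $\xi(v)$ in $T$ (as $T'$ is always a subgraph of $T$). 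In each case $\dist_G(v,\phi(v))$ is at most $\dist_G(v,\xi(v))$ plus the $\dist_G$-lengths of at most two edges of $T$; since $(\xi(v),v)\in E$ by Lemma~\ref{lm:xi} and each edge of $T$ has $\dist_G$-length at most $2$ by Definition~\ref{def:CCT}, we get $\dist_G(v,\phi(v))\le 1+2+2=5$.

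For Properties~1 and~2: every open facility is a copy placed at some $u$ taken through a general step, and each such $u$ lies in $I$ (a selected non-leaf of $I$, or $u=u^*$, which lies in $I$ since it is in layer $0$), which gives the first half of Property~1. By (ii), at termination the only facility whose clients can still be in $S$ is $u^*$: every non-root facility is eventually removed by the subtree-trimming step $T'\leftarrow(T'\setminus\hat T)\cup\{u\}$, where it lies in $\hat T\setminus\{u\}=P$ and therefore has $\xi^{-1}(P)\cap S$ (hence all of its own still-unassigned $\xi$-clients) assigned in that very iteration, whereas $u^*$ never appears in such a $P$. Thus $S$ is exactly the set of $q'$ clients of $\xi^{-1}(u^*)$ left behind by the root step, so $S\subseteq\xi^{-1}(u^*)$ and $0\le|S|\le L-1$. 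Finally, Property~3 makes $\calC\setminus S$ a disjoint union of $|F|$ blocks of size $L$, so $|\calC|-|S|=L\cdot|F|$; since $|\calC|=aL+b$ with $0\le b\le L-1$, this forces $|S|=b$ and $|F|=a$. (As $u^*\in I$ and $|\xi^{-1}(u^*)|\ge L$, we have $|\calC|\ge L$, hence $a\ge1$ and the first general step never runs short of clients.)

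The main obstacle is invariant (i); invariant (ii) is an easy consequence of the observation just used, that a non-root facility is deleted exactly when it appears in some $P$, so I would concentrate on (i). The key is a structural fact about the order of processing: since the algorithm always selects a core-center $u\in I$ whose surviving grandchildren are leaves and then deletes the whole subtree strictly below $u$, a core-center can never be processed after its grandparent in $I$ has been processed, for that grandparent's general step would already have removed it. Consequently, for every previously processed core-center $u'$, the current $u$ is neither $u'$ (each non-root core-center is processed at most once, and the root's general step is unique) nor a child or grandchild of $u'$ (a child of $u'$ sits at an odd layer, contradicting $u\in I$; a grandchild of $u'$ would make $u'$ the already-processed grandparent of $u$, contradicting the previous sentence). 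Since every earlier step assigns only clients whose $\xi$-image is $u'$ or a child/grandchild of $u'$, no client of $\xi^{-1}(u)$ has been touched, which is (i). I expect the bulk of the care to go into this layer-parity and processing-order bookkeeping, together with the fact that the root $u^*$ is handled in two stages --- a general step (possibly with $P=\emptyset$) followed by the root step; once these are pinned down, the remaining arguments are routine.
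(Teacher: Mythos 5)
Your proposal is correct and takes essentially the same route as the paper's proof: a direct trace of Algorithm~\ref{alg:passup}, with the counting argument ($|\calC|=aL+b$ and exactly $L$ clients per opened copy) giving Properties 1--3 and the triangle inequality ($\dist_G(v,\xi(v))=1$ plus at most two \CCT\ edges of length at most $2$) giving Property 4. The only difference is that you explicitly establish the invariant $\xi^{-1}(u)\subseteq S$ at each general step (via layer parity and the permanence of subtree removals), which the paper's brief feasibility argument for Line 7 leaves implicit; this is a welcome tightening rather than a different approach.
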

\begin{proof}
We first prove the feasibility of Algorithm \ref{alg:passup}. The feasibility of Line 7 follows from the fact that $|\xi^{-1}(u)|\geq L$ by Lemma \ref{lm:xi}. Since $0\leq q\leq L-1$ by Line 6, we can always pick $L-q$ clients from $\xi^{-1}(u)$. The feasibility of Line 9 follows from the fact that $|A|=l+L-q=(t+1)L$. Since we open $(t+1)$ centers at $u$, it is able to assign exactly $L$ clients in $A$ to each center.

Then we prove the properties of the output. The first three properties mainly follow from the fact that $|\calC|=aL+b$ and we assign exactly $L$ clients to each open center. We only need to verify that $S\subseteq \xi^{-1}(u^*)$. By Line 3, we always pick $u=u^*$ in the last iteration of Algorithm \ref{alg:passup}. By Line 11-16, this fact is obvious. For each center at $u$, it only serves $L$ clients in $(\xi^{-1}(P)\cap S)\cup K$. By the definition of $P$ and $K$, we conclude the first part of the last property.
Moreover, we have $\dist_G(v,\xi(v))=1$ by Lemma \ref{lm:xi} and $\dist_G(\xi(v),u_{\phi(v)})\leq 4$ by Definition \ref{def:CCT}. By the triangle inequality, we have $\dist_G(v,\phi(v))\leq 5$.
\end{proof}

\topic{Procedure Pass-Down.}
After the procedure \emph{pass-up}, we still leave an unassigned client set $S$ of size $b$. However, our goal is to serve at least $p$ clients. Therefore, we need to modify the assignment function $\phi$  and serve more clients.

The procedure pass-down handles the remaining $b$ clients in $S$ one by one, see Algorithm \ref{alg:passdown} for details. At the beginning of pass-down, we initialize an 'unscanned' client set $B\leftarrow \calC\setminus S$, i.e., $B$ is the collection of those clients allowing to be reassigned by pass-down. In each iteration, we arbitrarily pick a client $v\in S$ and assign it to the root node $u^*$. However, if each open facility at $u^*$ has already served $U_{u^*}$ clients by $\phi$, assigning $v$ to $u^*$ will violate the capacity upper bound. In this case, we actually find an open center $u_j\in F$ such that $|\phi^{-1}(u_j)|<U_j$, i.e., there are less than $U_j$ clients assigned to $u_j$ by $\phi$. We then construct an \emph{exchange route} consisting of open facilities in $F$. We first find a sequence of nodes $w_0=u^*,w_1,\cdots,w_m=u_j$ in $T$ satisfying that $w_i$ is the grandparent of $w_{i+1}$ in the core-center tree $T$ for all $0\leq i\leq m-1$. Then for each node $w_i$ $(1\leq i\leq m-1)$, we pick a client $v_i\in \xi^{-1}(w_i)$ which has not been reassigned so far. We call such a sequence of clients $v, v_1,\ldots,v_{m-1}$ an exchange route. Our algorithm is as follows: 1) we assign $v$ to $\phi(v_1)$; 2) we iteratively reassign $v_i$ to $\phi(v_{i+1})$ in order $(1\leq i\leq m-2)$; 3) finally we reassign $v_{m-1}$ to $u_j$. We then mark all clients $v_i$ $(1\leq i\leq m-1)$ in the exchange route by removing them from the 'unscanned' client set $B$. Note that our exchange route only increases the number of clients assigned to $u_j$ by one.
We will prove such an exchange route always exists in each iteration.
Thus in each iteration, the procedure pass-down assigns one more client $v\in S$ to some open facility in $F$. We will argue that there are at least $p$ clients served by $F$ at the end of pass-down.


\begin{algorithm}
  \caption{Pass-Down}
  \label{alg:passdown}
  \textbf{Input:} an induced \supunisoftout\ instance induced by $G=(\calC\cup \calF,E)$, a \CCT\ $T=(\calF,E_T)$, a function $\xi:\calC\rightarrow \calF$, an open facility set $F=\{u_1,u_2,\cdots,u_a\}$, an unassigned client set $S$, and a function $\phi:(\calC\setminus S) \rightarrow F$; \\
    Initialize $B\leftarrow \calC\backslash S$;\\
    \While{$S\neq \emptyset$ and $\exists 1\leq j\leq a$, $|\phi^{-1}(j)|<U_j$}{
        Arbitrarily pick a client $v\in S$ and an open facility $u_j$ ($1\leq j\leq a$) satisfying that $|\phi^{-1}(u_j)|<U_j$;\\
        \eIf{$u_j=u^*$}{
            Let $\phi(v)\leftarrow u_j$, $S\leftarrow S\setminus \{v\}$;\\
        }{
            Let $w_0=u^*,w_1,\cdots,w_m=u_j$ be the sequence of nodes in $T$ where $w_i$ is the grandparent of $w_{i+1}$ for all $1\leq i\leq m-1$. \\
            Let $v_0\leftarrow v$. For every $1\leq i\leq m-1$, arbitrarily pick a client $v_i\in\xi^{-1}(w_i)\cap B$;\\
            \For{$i=0,\cdots,m-2$}{
                Reassign $\phi(v_i)\leftarrow \phi(v_{i+1})$;\\
            }
            Reassign $\phi(v_{m-1})\leftarrow u_j$;\\
            Let $S\leftarrow S\setminus \{v\}$, $B\leftarrow B\setminus\{v_1,v_2,\cdots,v_{m-1}\}$;\\
%
            Let $\phi(v_{m-1})\leftarrow u_j$, $S\leftarrow S\setminus \{v\}$;\\
        }
    }
    \textbf{Output:} $C\leftarrow \calC\setminus S$, $F=\{u_1,u_2,\cdots,u_a\}$ and $\phi: C\rightarrow F$;

\end{algorithm}


Now we prove the following lemma. Note that Theorem \ref{thm:5app} can be directly obtained by Lemma \ref{lm:bipartite} and Lemma \ref{lm:passdown}.

\begin{lemma}
\label{lm:passdown}
Algorithm \ref{alg:passdown} outputs a distance-5 solution $(C,F,\phi)$ of the given induced \supunisoftout\ instance induced by $G=(\calC\cup \calF,E)$ in polynomial time.

\end{lemma}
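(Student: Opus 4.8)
The plan is to establish two things: first, that Algorithm \ref{alg:passdown} is well-defined, i.e., every exchange route it attempts to build actually exists; and second, that upon termination the triple $(C,F,\phi)$ is a feasible distance-$5$ solution, meaning $|C|\geq p$, every facility $u_i\in F$ satisfies $L\leq|\phi^{-1}(u_i)|\leq U_{u_i}$, and $\dist_G(v,\phi(v))\leq 5$ for every $v\in C$. Polynomial running time is immediate since each iteration removes one client from $S$ and the exchange route has length at most the depth of $T$, so I would only remark on it briefly.

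First I would verify the exchange route exists. Fix an iteration with chosen $v\in S$ and $u_j\neq u^*$ with $|\phi^{-1}(u_j)|<U_j$. The node sequence $w_0=u^*,w_1,\dots,w_m=u_j$ with $w_i$ the grandparent of $w_{i+1}$ is well-defined because $u_j\in I$ (by Lemma \ref{lm:passup}, part 1) lies in an even layer, so walking up two layers at a time from $u_j$ reaches $u^*$. The content is that for each internal $w_i$ ($1\leq i\leq m-1$) we can pick a client $v_i\in\xi^{-1}(w_i)\cap B$ that has not yet been used. Here I would use the key invariant maintained throughout pass-down: \emph{for every $u\in I$, $|\xi^{-1}(u)\cap B|\geq |\phi^{-1}(u)| - (\text{number of centers in }F\text{ at }u)\cdot(\text{something})$} — more precisely, I would track that the clients removed from $B$ at each $w_i$ are ``charged'' against the extra client that flows into $w_i$'s cluster, and that $|\xi^{-1}(w_i)|\geq L\geq |\phi^{-1}(w_i)|$ initially gives enough slack. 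Concretely: before pass-down, $|\phi^{-1}(w_i)|$ counts exactly the clients $\phi$ assigned at $w_i$, each exchange route through $w_i$ removes one client of $\xi^{-1}(w_i)$ from $B$ but the reassignment keeps $|\phi^{-1}(w_i)|$ unchanged (one client $v_{i-1}$ comes in from $\phi(v_i)$'s old slot... — careful: actually $\phi(v_{i})$ is reassigned to $\phi(v_{i+1})$, so $w_i$'s cluster gains $v_{i-1}$ and loses $v_i$, net zero). Thus $|\phi^{-1}(w_i)|$ is invariant, equal to its post-pass-up value $L$ (times number of copies, but $w_i$ has exactly one relevant copy here since only $u^*$ gets multiple copies in the generic case), while each route consumes one element of $\xi^{-1}(w_i)\cap B$; since $|\xi^{-1}(w_i)|\geq |N_G[w_i]|\geq L$ and at most... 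I would count that the number of exchange routes is at most $b\leq L-1$, and show $|\xi^{-1}(w_i)\cap B|$ stays positive by comparing it to the clients of $\xi^{-1}(w_i)$ that were assigned away from $w_i$ during pass-up. This bookkeeping is the main obstacle: getting the invariant exactly right so that $\xi^{-1}(w_i)\cap B$ is provably nonempty at every step.

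Next I would check feasibility of the output. Upper bounds: the while-loop guard ensures we only push into $u_j$ when $|\phi^{-1}(u_j)|<U_j$, and each route increases $|\phi^{-1}(u_j)|$ by exactly one (shown above: $u_j$ gains $v_{m-1}$ and loses nothing, since $v_{m-1}$ previously sat at $w_{m-1}$), so no upper bound is violated; for $u^*$, when $u_j=u^*$ we directly add $v$ and the guard protects us. Lower bounds: by Lemma \ref{lm:passup} part 3 every $u_i\in F$ starts with $|\phi^{-1}(u_i)|=L$, and since every reassignment is net-zero at internal nodes and net $+1$ at the terminal node, no cluster ever drops below $L$. Distance bound: the newly assigned $v$ goes to $\phi(v_1)$ where $v_1\in\xi^{-1}(w_1)$ and $w_1$ is a grandchild of $u^*$; I would trace $\dist_G(v,\phi(v_1))\leq\dist_G(v,u^*)+\dist_G(u^*,w_1)+\dist_G(w_1,\phi(v_1))\leq 2+4+\dots$ — actually the clean way is: $v\in S\subseteq\xi^{-1}(u^*)$ so $\dist_G(v,u^*)=1$; and I claim every client ever handled stays within distance $5$ of its final facility by the same triangle-inequality argument as in Lemma \ref{lm:passup} part 4, since each $v_i\in\xi^{-1}(w_i)$ is moved to $\phi(v_{i+1})$ which sits at a node within distance $2$ (grandparent) of... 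I would make this precise using that $\phi(v_{i+1})$ is at node $w_{i+1}$ or its ancestor within the pass-up bound, combined with $\dist_G(w_i,w_{i+1})\leq 2$. Finally, $|C|\geq p$: when pass-down terminates, either $S=\emptyset$ (so $C=\calC$, $|C|=n\geq p$) or every $u_i\in F$ has $|\phi^{-1}(u_i)|=U_i$, whence $|C|=\sum_i U_i$; I would argue $\sum_i U_i\geq p$ because the optimal distance-$1$ solution opens facilities whose capacities sum to at least $p$, and a counting/packing argument (each core-center $u_i$'s neighborhood $N_G[u_i]$ and the capacity it can absorb dominates the corresponding optimal facilities mapped to it) shows our $F$ has at least as much total capacity — this last point, relating $\sum_{u\in F}U_u$ to $p$ via the optimum, is the second place I expect to need care. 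Assembling these gives the distance-$5$ solution, and Theorem \ref{thm:5app} then follows via Lemma \ref{lm:bipartite}.
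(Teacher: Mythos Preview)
Your overall structure matches the paper's: verify the exchange route exists, check feasibility (lower/upper bounds, distance $5$), and show $|C|\geq p$. But at the two places you flag as ``obstacles'' you are reaching for the wrong tools, and the paper's arguments there are much simpler than what you sketch.

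For the exchange route, you do not need any invariant involving $|\phi^{-1}(w_i)|$. The entire argument is a direct count: pass-down runs for at most $|S|=b<L$ iterations, and in each iteration each node $w_t\in I$ appears at most once on the root-to-$u_j$ path, so across the whole algorithm at most $b$ clients are ever removed from $\xi^{-1}(w_t)\cap B$. Since $|\xi^{-1}(w_t)|\geq L>b$ by Lemma~\ref{lm:xi}, the set $\xi^{-1}(w_t)\cap B$ is always nonempty. That is the whole proof of existence; tracking cluster sizes is unnecessary here (it is relevant only for the upper/lower bound check, which you handle correctly).

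For $|C|\geq p$, the packing/neighborhood argument you describe is unnecessary in the uniform-$U$ case. The optimal solution opens some number $k^*$ of (soft) centers, each serving at least $L$ pairwise-disjoint clients, so $k^*\leq \lfloor |\calC|/L\rfloor=a$; hence $p\leq k^*U\leq aU=\sum_{i=1}^a U_{u_i}$. One line.

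Finally, a small error in your distance computation: $w_{i+1}$ is a \emph{grandchild} of $w_i$ in $T$, so $\dist_G(w_i,w_{i+1})\leq 4$, not $2$. The correct chain is: $v_i\in\xi^{-1}(w_i)$ gives $\dist_G(v_i,w_i)\leq 1$; and since $v_{i+1}\in B$ still carries its pass-up assignment, Lemma~\ref{lm:passup}(4) says $\phi(v_{i+1})$ is $w_{i+1}$, the parent of $w_{i+1}$, or the grandparent $w_i$, so $\dist_G(w_i,\phi(v_{i+1}))\leq 4$. Triangle inequality then gives $\dist_G(v_i,\phi(v_{i+1}))\leq 5$, and the same reasoning with $v_0=v\in\xi^{-1}(u^*)=\xi^{-1}(w_0)$ handles the head of the route.
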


\begin{proof}
We first verify the feasibility of Algorithm \ref{alg:passdown}. The feasibility of Line 8 follows from the fact that $u^*, u_j\in I$ by Lemma \ref{lm:passup}. Then we only need to show that an exchange route $\{v_0,\cdots, v_{m-1}\}$ described in Line 9-14 must exist in each iteration. On one hand, we verify that an exchange route $v,v_1,\ldots,v_{m-1}$ in Line 9 always exists. Since we assign one more client in $S$ in each iteration. Thus, there are at most $b$ iterations by Property 2 in Lemma \ref{lm:passup}. Then the node $w_t\in I$ appears at most $b$ times in the sequence in Line 8, and at most $b<L$ clients in $\xi^{-1}(w_t)$ are removed from $B$ at the end of the algorithm. By Lemma \ref{lm:xi}, we also have $|\xi^{-1}(w_t)|\geq L$. Thus we always have $\xi^{-1}(w_t)\cap B\neq \emptyset$ in Line 12, which proves the existence of an exchange route. On the other hand, for each open facility $\phi(v_i)$ ($1\leq i\leq m-1$), the number of clients served by $\phi(v_i)$ dose not change after Line 12. It is because we reassign $v_{i-1}$ to $\phi(v_i)$, and remove $v_i$ from it.

We then show that for all $v\in\calC$, we have $\dist_G(v,\phi(v))\leq 5$ at the end of the algorithm. In Line 11, we reassign a client $v_{t-1}\in \xi^{-1}(w_{t-1})$ to $\phi(v_t)$. Since $w_t$ is the grandchild of $w_{t-1}$, we have $\dist_G(v_t, w_{t-1})\leq 4$ by Property 4 in Lemma \ref{lm:passup}. Combining with $v_{t-1}\in \xi^{-1}(w_{t-1})$, we conclude that $\dist_G(v_{t-1},\phi(w_t))\leq 5$. Then by Property 4 in Lemma \ref{lm:passup}, we finish the proof.

Finally, we show that the output client set $C$ is of size at least $p$. In fact, we only need to prove that $\sum_{j=1}^a U\geq p$, i.e., $aU\geq p$. Since the number of open facility centers in the optimal solution served at least $p$ clients is at most $a=\lfloor |\calC|/L \rfloor$, we have $p\leq aU$.
\end{proof}

\topic{\uniout.} Consider the \uniout\ problem with hard capacities. We first treat a given induced \uniout\ instance as an induced \unisoftout\ instance. Then we apply Theorem \ref{thm:5app} and obtain a 5-approximation solution $(C,F,\phi)$. Since the two instances are induced by the same connected graph, the optimal capacitated center value of the induced \unisoftout\ instance is at most the optimal capacitated center value 1 of the induced \uniout\ instance. Therefore, we know $\max_{v\in C}\dist(v,\phi(v))\leq 5$. Since we have hard capacities, we still need to modify $F$ to be a single set. In fact, we can choose arbitrary vertex $v_i\in \phi^{-1}(u_i)$ to replace each $u_i\in F$ as an open center, and assign all vertices in $\phi^{-1}(u_i)$ to $v_i$. Note that the distance between any $v\in \phi^{-1}(u_i)$ and $v_i$ is at most  $5$ and the new $u_i$ is at most $10$. Thus we have the following theorem.

\begin{theorem}
\label{thm:uniout}
There exists a 10-approximation polynomial time algorithm for the \uniout\ problem.
\end{theorem}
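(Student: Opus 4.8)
The plan is to piggyback on the soft-capacity algorithm of Theorem~\ref{thm:5app} and pay an extra factor of two to turn a soft solution into a hard one. By Lemma~\ref{lm:bipartite} it suffices to produce a distance-$10$ solution for an arbitrary induced \uniout\ instance, say induced by a connected graph $G=(\calC\cup\calF,E)$ of optimal value at most $1$. The first observation is that any feasible hard-capacity assignment is also a feasible soft-capacity assignment of the same radius, so, regarding this instance instead as an induced \unisoftout\ instance on the very same graph $G$, its optimum is still at most $1$. Running the algorithm of Theorem~\ref{thm:5app} then yields $(C,F,\phi)$ with $|C|\ge p$, $L\le|\phi^{-1}(u)|\le U$ for every (possibly repeated) open center $u\in F$, and $\dist_G(v,\phi(v))\le 5$ for all $v\in C$.

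The core of the argument is converting this soft solution into a genuine hard-capacity one, i.e., one whose open centers are distinct vertices used exactly once. For each copy $u_i$ appearing in $F$, pick an arbitrary client $w_i\in\phi^{-1}(u_i)$ and promote $w_i$ to be the new open center, reassigning the whole cluster $\phi^{-1}(u_i)$ to $w_i$; this is legitimate in the center version because every vertex is simultaneously a client and a facility. Since the clusters $\phi^{-1}(u_i)$ are pairwise disjoint --- even when several copies $u_i$ sit at the same vertex --- the representatives $w_i$ are pairwise distinct, so the resulting open-center set is a bona fide subset of $V$ with no multiplicity; the cluster sizes are unchanged, so the two-sided capacity bounds are preserved; and the served set $C$ is untouched, so still $|C|\ge p$. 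For the radius, any $v\in\phi^{-1}(u_i)$ has $\dist_G(v,u_i)\le 5$ and $\dist_G(w_i,u_i)\le 5$, so $\dist_G(v,w_i)\le 10$ by the triangle inequality. This is a polynomial-time distance-$10$ solution of the induced instance, and Lemma~\ref{lm:bipartite} upgrades it to a $10$-approximation for \uniout.

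The only point requiring care --- the ``hard part,'' insofar as there is one --- is the soft-to-hard conversion: one must check that collapsing each opened copy onto a distinct physical vertex neither merges two copies nor drops a cluster below the lower bound, and both follow at once from disjointness of the fibers $\phi^{-1}(u_i)$ together with $|\phi^{-1}(u_i)|\ge L\ge 1$. Note also that this step genuinely uses that we are in the center (not supplier) version, since it relocates a center onto a client; this is consistent with the worse bound recorded for the supplier variant. Everything else is a direct appeal to Theorem~\ref{thm:5app} and Lemma~\ref{lm:bipartite}, and the final constant $10$ is simply the $5$ of Theorem~\ref{thm:5app} doubled by the single triangle-inequality hop needed to pass from the ``virtual'' center $u_i$ to the real center $w_i$.
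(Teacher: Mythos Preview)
Your proposal is correct and follows essentially the same approach as the paper: reduce to an induced instance via Lemma~\ref{lm:bipartite}, run the soft-capacity $5$-approximation of Theorem~\ref{thm:5app}, then replace each (possibly repeated) center $u_i$ by a client $w_i\in\phi^{-1}(u_i)$ and use one triangle inequality to double the radius. You are in fact slightly more careful than the paper in spelling out why the representatives $w_i$ are distinct (disjointness of the fibers) and why the move to a client requires the center rather than the supplier version.
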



\subsection{\nonuniout}
\label{sec:generalized}

In this subsection, we consider a more complicated case where the capacity upper bounds are non-uniform, and each vertex has a hard capacity. Our main theorem is as follows.

\begin{theorem}
\label{thm:11app}
(main theorem)
There exists an 11-approximation polynomial time algorithm for the \nonuniout\ problem.
\end{theorem}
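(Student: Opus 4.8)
The plan is to follow the blueprint of Theorem~\ref{thm:5app}: reduce to an induced instance, build a \CCT, run a pass-up phase and then a pass-down phase; the new difficulty is bridging the gap between \emph{soft uniform} and \emph{hard non-uniform} upper bounds. By Lemma~\ref{lm:bipartite} it suffices to give a polynomial-time algorithm that, on any induced \supnonuniout\ instance arising from a center instance --- so the inducing graph $G=(\calC\cup\calF,E)$ has optimal value at most $1$ and, because this is the center version, $\calF$ contains a co-located copy of every client --- outputs a distance-$11$ solution. As in Section~\ref{sec:simple}, first preprocess $\calF$ so that $L\le|N_G[u]|$ and reset $U_u\leftarrow\min\{U_u,|N_G[u]|\}$; then build a \CCT\ $T=(\calF,E_T)$ rooted at $u^*$ with core-center set $I$ (Lemma~\ref{lm:cct}) and a map $\xi:\calC\to\calF$ with $(\xi(v),v)\in E$ and $|\xi^{-1}(u)|\ge L$ for all $u\in I$ (Lemma~\ref{lm:xi}).

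First I would run the pass-up procedure of Algorithm~\ref{alg:passup}, treating the instance as \emph{soft}. By Lemma~\ref{lm:passup} this yields a multiset $F$ of open centers located at core-centers of $T$, an assignment $\phi$ in which every open center serves exactly $L$ clients and $\dist(v,\phi(v))\le 5$ for every assigned $v$, and a residual set $S\subseteq\xi^{-1}(u^*)$ with $|S|<L$. This $F$ is ``redundant'' for the hard, non-uniform problem in two ways: it may place at a core-center $u$ more copies than the hard bound $U_u$ permits, and it was produced without regard to the $U_u$'s at all.

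The heart of the argument is to replace $F$ by a genuine (non-redundant) open facility set $F'\subseteq\calF$ that is ``not too far'' from $F$ and still has $\sum_{u\in F'}U_u\ge p$. Concretely, for each core-center $u$ I would realize the copies of $F$ that sit at $u$ by distinct facilities co-located with vertices of $\xi^{-1}(u)\cup\{u\}$, all of which lie within distance $1$ of $u$; whenever $U_u$ is too small to host all of them, the overflow clients are pushed one level up the \CCT\ --- to the grandparent core-center, which is within distance $4$ --- where they are re-absorbed by further facilities or by already-open centers, and the process is iterated toward the root. Since an optimal distance-$1$ solution opens at most $\lfloor|\calC|/L\rfloor=|F|$ centers and serves at most the total capacity of the vertices it opens, a counting argument shows this re-packing can be completed while keeping $\sum_{u\in F'}U_u\ge p$ and loading every center of $F'$ with at least $L$ clients. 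Finally I would run the pass-down of Algorithm~\ref{alg:passdown} on $F'$: for each still-unassigned client $v$ (those in $S$ and those temporarily released during the re-packing) find an open $u_j\in F'$ below its capacity, walk the grandparent chain $w_0=u^*,\dots,w_m=u_j$ in $T$, pick for each $i$ an untouched $v_i\in\xi^{-1}(w_i)$, and perform the cyclic re-routing; existence of the $v_i$ follows, as in Lemma~\ref{lm:passdown}, from $|\xi^{-1}(w_i)|\ge L$ together with a bound of fewer than $L$ on the number of clients of $\xi^{-1}(w_i)$ ever consumed. Tracking the distance a client accrues through pass-up ($\le 5$), the vertical re-packing, and the exchange route, and summing the triangle-inequality contributions, bounds every final cluster radius by $11$; together with $|C|\ge\sum_{u\in F'}U_u\ge p$ and Lemma~\ref{lm:bipartite} this proves the theorem.

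I expect the main obstacle to be the middle step: producing a non-redundant $F'$ that is simultaneously close to $F$, has hard non-uniform total capacity at least $p$, and admits a feasible assignment filling each of its centers to at least $L$ after the released clients have been re-routed. In the uniform soft case this never arose --- one simply opened $\lceil\cdot/L\rceil$ identical copies --- but with hard non-uniform bounds \emph{and} outliers one must argue, via an exchange/augmenting-path argument on the \CCT\ compared against a fixed optimal distance-$1$ solution, both that such an $F'$ exists and that every resulting reassignment stays within distance $11$. The remaining ingredients --- the reduction of Lemma~\ref{lm:bipartite} and the \CCT\ and $\xi$ constructions of Lemmas~\ref{lm:cct}--\ref{lm:xi}, plus the pass-up/pass-down mechanics of Section~\ref{sec:simple} --- are already available, so the technical core is the capacity and lower-bound bookkeeping of the re-packing.
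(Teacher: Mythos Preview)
Your overall architecture matches the paper: reduce via Lemma~\ref{lm:bipartite}, build the \CCT\ and $\xi$, run pass-up to get a soft multiset $F$ with $|\phi^{-1}(u_i)|=L$ and radius $5$, replace $F$ by a non-redundant $F'$ with enough total upper capacity, then finish with pass-down. The gap is exactly where you flagged it, and the mechanism you sketch for that step does not work.

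Your re-packing ``pushes overflow one level up the \CCT\ \ldots\ and the process is iterated toward the root.'' Each push moves a client by distance up to $4$, and nothing in your argument bounds how many times a client is pushed; consequently the accrued distance is not bounded by any absolute constant, and the claimed radius $11$ does not follow from ``summing the triangle-inequality contributions.'' Separately, you assert that ``a counting argument'' gives $\sum_{u\in F'}U_u\ge p$, but you never compare against an optimal solution; with non-uniform $U_u$'s there is no reason a purely local, upward-pushing scheme ever encounters facilities of sufficiently large capacity. (Also, your inequality $|C|\ge\sum_{u\in F'}U_u$ is backwards; one needs $\sum U_{u'_i}\ge p$ to guarantee pass-down can serve $p$ clients, not the other way around.)

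The paper replaces your iterated push-up by a single global matching (Lemma~\ref{lm:matching}). Form a bipartite graph $B=(F,\calC;E_B)$ with an edge $(u_j,v)$ whenever $\dist_G(u_j,v)\le 6$, weighted by $U_v$, and take a maximum-weight maximum matching $M$. A Hall-type argument against a fixed optimal open set $O$ shows $O$ can be matched into $F$: if some $O'\subseteq O$ had $|N_B(O')|<|O'|$, then the $\ge L|O'|$ clients that $O'$ serves optimally (minus the $<L$ clients in $S$) would all be sent by $\phi$ into $N_B(O')$, contradicting that each $u_j\in F$ serves exactly $L$ clients. Hence some matching covers $O$ with weight $\ge\sum_{w\in O}U_w\ge p$; augmenting paths then extend it to cover all of $F$ without unmatching any vertex of $O$, so the weight stays $\ge p$. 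Taking $u'_i$ to be the mate of $u_i$ gives a genuine set $F'$ with $\dist_G(u_i,u'_i)\le 6$ and $\sum_i U_{u'_i}\ge p$ in one shot.

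One more point: the paper does \emph{not} run pass-down on $F'$. It runs pass-down on the original core-centers $F$, but with the substituted capacities $U_{u_i}\leftarrow U_{u'_i}$; this is essential because the exchange-route argument of Lemma~\ref{lm:passdown} requires the centers to lie on a grandparent chain in $I$, which $F'$ need not do. Only after pass-down yields a distance-$5$ assignment into $F$ does one reassign each client from $u_i$ to $u'_i$, adding at most $6$, for the final bound $5+6=11$.
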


By Lemma \ref{lm:bipartite}, we only need to consider induced \supnonuniout\ instances. For an induced \supnonuniout\ instance induced by an undirected unweighted connected graph $G=(V=\calC\cup \calF,E)$, recall that we can assume $U_u\leq |N_G(u)|$ for every vertex $u\in \calF$.
\footnote{Recall that we may remove some facilities from $\calF$ such that this assumption is satisfied. Thus, the set $\calF$ may be a subset of $V$.}
Since we consider the center version, every vertex $v\in\calC$ has an individual capacity interval $[L,U_v]$ and can be opened as a center as well. This fact is useful for our following algorithm and is the reason why we do not consider the supplier version in this subsection.

Similar to \unisoftout, our algorithm first computes a core-center tree $T=(\calF,E)$ rooted at $u^*$, a core-center set $I$ and a function $\xi$ described as in Lemma \ref{lm:xi}. Assume that $|\calC|=aL+b$ for some $a\in \bbbn$ and $0\leq b\leq L-1$. Note that the procedure \emph{pass-up} algorithm does not depend on the capacity upper bounds. Therefore, we still use the procedure \emph{pass-up} to compute an open set $F=\{u_1,u_2,\cdots,u_a\}$, an unassigned set $S\subseteq \xi^{-1}(u^*)$ of size $b<L$, and a function $\phi:(\calC\setminus S) \rightarrow F$.

However, we can not apply \emph{pass-down} directly. On one hand, since we consider non-uniform capacity upper bounds, the inequality $\sum_{j=1}^a U_{u_j}\geq p$ may not be satisfied. We need to choose open centers carefully such that at least $p$ vertices can be served. On the other hand, we can not open multiple facilities in a single vertex by hard capacities. Thus, we need the following lemma to modify the open center set $F$.

\begin{lemma}
\label{lm:matching}
Given an induced \nonuniout\ instance induced by $G=(V=\calC\cup \calF,E)$ where $|N_G(u)|\geq U_u$ for each $u\in \calF$ and an open set $F=\{u_1,u_2,\cdots,u_a\}$ computed by \emph{pass-up}, there exists a polynomial time algorithm that finds another open set $F'=\{u'_1,u'_2,\ldots,u'_a\}$ satisfying the following properties:
\begin{enumerate}
\item $F'$ is a single set.
\item For all $1\leq i\leq a$, we have $\dist_G(u_1,u'_1)\leq 6$.
\item $\sum_{i=1}^{a}U_{u'_i}\geq p$.
\end{enumerate}
\end{lemma}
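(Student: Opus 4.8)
The plan is to build $F'$ from $F$ by a careful local replacement, guided by a bipartite matching argument that simultaneously guarantees (a) that every facility we keep is chosen at most once (so $F'$ is a single set), (b) that each replacement moves us by at most a bounded distance in $G$, and (c) that the total capacity of the chosen facilities is at least $p$. The starting observation is that each $u_i\in F$ lies in the core-center set $I$, so distinct core-centers are at graph-distance $\geq 3$ and their neighborhoods $N_G[\cdot]$ are disjoint (Lemma~\ref{lm:xi} and Definition~\ref{def:CCT}); the only way $F$ fails to be a single set is that the same core-center $u\in I$ was opened with multiplicity $t+1>1$ during one \emph{pass-up} iteration. So really we only need to ``spread out'' the extra copies opened at each such $u$.

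First I would, for each core-center $u$ opened with multiplicity $m_u\geq 1$ in $F$, look at the set of clients that were assigned to those copies during \emph{pass-up}, namely (in the notation of Algorithm~\ref{alg:passup}) the set $A_u=(\xi^{-1}(P_u)\cap S)\cup K_u$ of size $m_u L$, where $P_u$ is the set of children and grandchildren of $u$ together with $u$ itself. These clients all lie within graph-distance $\le 4$ of $u$, hence within distance $\le 5$ of each other, and — crucially — each such client $v$ is itself a facility (here we use that this is the center version, $\calC=\calF$, as the excerpt emphasizes) with capacity interval $[L,U_v]$ and $|N_G[v]|\ge U_v$. The idea is to replace the $m_u$ copies of $u$ by $m_u$ vertices drawn from $A_u$ (or from $N_G[u]\cup\{u\}$), chosen so that their capacity upper bounds are as large as possible; since each of these candidate vertices is within distance $\le 5$ of $u$ and hence $\le 6$ of any $u_i$ sitting at $u$, property~2 of the lemma ($\dist_G(u_i,u_i')\le 6$) will follow. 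Because the candidate pools $A_u$ for distinct core-centers $u$ are contained in disjoint neighborhoods, these replacements do not interfere, and $F'$ is automatically a single set — property~1.

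The heart of the argument is property~3, $\sum_i U_{u_i'}\ge p$. Here I would compare against the optimal distance-$1$ solution $(C^\star,F^\star,\phi^\star)$, which opens exactly $k$ facilities serving $|C^\star|\ge p$ clients with $|\phi^{\star-1}(f)|\le U_f$ for each $f\in F^\star$. Map each optimal facility $f\in F^\star$ to the unique core-center $u=u(f)\in I$ with $f\in N_G[u]$ if one exists (if $f$ has no core-center neighbor it can be routed to a grandparent in $T$, still within bounded distance); since $f$ serves $\le U_f$ clients and all of $f$'s clients lie within distance $\le 2$ of $f$ and hence within the ``catchment'' that \emph{pass-up} assigned to $u$, the number of distinct core-centers needed, and hence the number of copies $a=\lfloor|\calC|/L\rfloor$ opened by \emph{pass-up}, is enough to ``cover'' the optimal solution's service. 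More precisely, the total number of clients the optimal solution serves from the disjoint regions $\xi^{-1}(P_u)\cup N_G[u]$ around each core-center $u$ is at most $\sum_{f\in F^\star}U_f$, and I want to choose the $m_u$ replacement vertices in each region so that their capacity sum dominates the capacity the optimal solution extracts from that region; summing over all $u$ then gives $\sum_i U_{u_i'}\ge |C^\star|\ge p$. This is exactly a degree-constrained bipartite assignment (clients of $C^\star$ on one side, candidate replacement facilities on the other, within-distance-$\le 6$ edges), so I would invoke Hall's theorem / max-flow to extract the actual choice of $F'$.

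\textbf{Main obstacle.} The delicate point is property~3 together with the disjointness bookkeeping: making sure that when we greedily pick the $m_u$ largest-capacity vertices out of region $u$ we really recover at least as much capacity as the optimal solution draws from that same region, \emph{and} that every optimal facility is charged to exactly one region so no client is double-counted. The regions $\xi^{-1}(P_u)$ for different $u$ overlap in principle (a client's $\xi$-image need not be a core-center), so I expect the clean statement to require either re-defining the regions via the partition induced by \emph{pass-up}'s scanning order (each client is scanned in exactly one iteration), or slightly enlarging the replacement radius — which is presumably why the bound in the lemma is $6$ rather than $5$. The distance bookkeeping (triangle inequalities chaining $v$–$\xi(v)$–$u$–$u_i'$) and the multiplicity accounting are routine once the region partition is fixed; the combinatorial core is the Hall-type capacity comparison.
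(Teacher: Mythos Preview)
Your instincts are right --- bipartite matching plus Hall's theorem is exactly the engine --- but the bipartition you set up and the per-region greedy are not the ones that close the argument. The paper does \emph{not} try to partition the instance into regions around each core-center and compare capacities region by region; instead it builds a single global bipartite graph $B=(F,\calC;E_B)$ with $(u_j,v)\in E_B$ iff $\dist_G(u_j,v)\le 6$, weights each edge by $U_v$, and takes a maximum-weight maximum matching. Property~1 (single set) and property~2 (distance $\le 6$) then fall out immediately once one knows $F$ is perfectly matched, which is easy: match each $u_i$ to any client $v$ with $\phi(v)=u_i$ (distance $\le 5$).

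The point you correctly flag as the obstacle --- why does the chosen matching have total weight $\ge p$ --- is handled by applying Hall's theorem not to your ``regions'' but to the optimal open set $O\subseteq\calC$ sitting on the $\calC$ side of $B$. For any $O'\subseteq O$, the optimal solution serves at least $L|O'|$ clients from $O'$; all but at most $|S|<L$ of these are assigned by $\phi$ to some $u_i\in F$ at distance $\le 5$, hence within distance $\le 6$ of $O'$, so those $u_i$ lie in $N_B(O')$. Since each $u_i$ serves exactly $L$ clients under $\phi$, this forces $|N_B(O')|\ge|O'|$. Thus there is a matching $M_1$ saturating $O$, of weight $\sum_{o\in O}U_o\ge p$. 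Now augment $M_1$ until all of $F$ is matched; augmenting paths only add matched vertices on the $\calC$ side, so the weight (which is $\sum U_v$ over matched $v$) can only go up. Hence the max-weight max-matching already has weight $\ge p$ and saturates $F$.

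Your per-region greedy (``pick the $m_u$ largest-capacity vertices out of region $u$'') does not obviously give property~3, precisely for the reason you identify: an optimal facility $f$ need not sit inside the region of the core-center that absorbed its clients during \emph{pass-up}, and charging $f$ to a single region without double-counting requires essentially reconstructing the global Hall argument anyway. The paper sidesteps this by never partitioning: the whole comparison is one Hall inequality on $B$.
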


We will prove the above lemma later. By Lemma \ref{lm:matching}, we are ready to prove Theorem \ref{thm:11app}.
\vspace{0.3cm}

\noindent \emph{Proof of Theorem \ref{thm:11app}.}
By Lemma \ref{lm:matching}, we obtain another open set $F'=\{u'_1,u'_2,\ldots,u'_a\}$. We first modify $U_{u_i}$ to be $U_{u'_i}$ for all $1\leq i\leq a$. Then we apply the procedure \emph{pass-down} according to the modified capacities. By Lemma \ref{lm:passdown}, we obtain a distance-5 solution $(C,F,\phi)$. Since $\sum_{i=1}^{a}U_{u'_i}\geq p$, at least $p$ vertices are served by $\phi$. Finally, for each vertex $v\in C$ and $u_i\in F$ such that $\phi(v)=u_i$, we reassign $v$ to $u'_i\in F'$, i.e., let $\phi(v)=u'_i$. By Lemma \ref{lm:matching}, we obtain a feasible solution for the given induced \nonuniout\ instance. Since $\dist(u_i,u'_i)\leq 6$ $(1\leq i\leq a)$, the capacitated center value of our solution is at most $5+6=11$. Combining with Lemma \ref{lm:bipartite}, we finish the proof.

\vspace{0.3cm}
Now we only need to prove Lemma \ref{lm:matching}.

\begin{proof}
We construct an undirected weighted bipartite graph $B=(F,\calC;E_B)$ as follows: $(u_j,v)\in E_B$ ($u_j\in F,v\in \calC$) if and only if $\dist(u_j,v)\leq 6$ and this edge has weight $U_v$. We then find a maximum-weight maximum-matching $M$ on this graph $B$. We only need to verify that $F$ is perfectly matched in $M$ and the total weight of $M$ is at least $p$. Suppose each $u_i\in \calF$ is matched to $v_i\in \calC$, we finish the proof by letting $u'_i=v_i$.

Define $O\subseteq \calF\subseteq \calC$ to be the optimal open center set. By Hall's theorem, we first prove the existence of a matching $M_1$ in $B$ satisfying that every vertex in $O$ is matched. For any subset $O'\subseteq O$, we assume by contradiction that $|N_B(O')|\leq |O'|-1$, where $N_B(O')=\cup_{v\in O'}N_B(v)$. Define $C'$ to be the set of vertices served by $O'$ in the optimal solution. By the capacity lower bound, we have $|C'|\geq L|O'|$. Recall that $S$ is the unassigned set of size $b<L$ obtained by pass-up. Therefore, we have $|C'\setminus S|>L(|O'|-1)\geq L|N_B(O')|$. On the other hand, for each vertex $v\in C'\setminus S$, we have $\dist_G(v,\phi(v))\leq 5$ by Lemma \ref{lm:passup}. Thus, we conclude that $\dist_G(\phi(v),O')=\min_{w\in O'}\dist_G(\phi(v),w)\leq 6$, which implies all $\phi(v)\in N_B(O')$. Since each open center $u_i\in F$ only serves $L$ vertices by Lemma \ref{lm:passup}, we have $|C'\setminus S|\leq L|N_B(O')|$ which is a contradiction. So we prove the existence of $M_1$. Note that the total weight of $M_1$ is at least $p$.

Note that there exists a matching on $B$ such that $F$ is perfectly matched. We can achieve this property by matching each $u_i\in \calF$ to an arbitrary vertex $v_i$ such that $\phi(v_i)=u_i$. Then by Hungarian Algorithm, we can construct a matching $M_2$ by iteratively finding augmenting paths based on $M_1$, until all vertices in $F$ are matched. Since any augmenting path can not make a matching vertex unmatched, we conclude that the total weight of $M_2$ is at least $p$. Thus, the maximum-weight maximum-matching $M$ on $B$ must satisfy that $F$ is perfectly matched and the total weight is at least $p$..
\end{proof}

\topic{\supnonunisoftout.} Consider the \supnonunisoftout\ problem with soft capacities.  Our technique is similar to \nonuniout\ except a difference procedure for choosing $F'$ in Lemma \ref{lm:matching}.
By Lemma \ref{lm:bipartite}, we again consider a given induced \supnonunisoftout\ instance induced by an undirected unweighted connected $G=(\calC\cup\calF,E)$. W.l.o.g., we assume that $|N_G(u)|\geq U_u$ for all facilities $u\in \calF$. Similarly, we compute $T,I,\xi$ and apply \emph{pass-up} to compute $F=\{u_1,u_2,\ldots,u_a\}$, $S\subseteq \xi^{-1}(u^*)$ and $\phi$. Before applying \emph{pass-down}, we also need to find another open facility set $F'=\{u'_1,\ldots,u'_a\}$. Since we have soft capacities, we only need to require $F'$ to satisfy Property 2 and 3 in Lemma \ref{lm:matching}. This is the reason why we can consider the supplier version. By the same technique as in the proof of Theorem \ref{thm:11app}, we have the following theorem.

\begin{theorem}
\label{lm:soft11}
There exists a poly-time algorithm achieving approximation ratio 11 for \supnonunisoftout\ problem.
\end{theorem}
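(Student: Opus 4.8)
The plan is to follow the template already established for \nonuniout\ in the proof of Theorem \ref{thm:11app}, changing only the construction of the intermediate open set $F'$. First I would invoke Lemma \ref{lm:bipartite} to reduce to an induced \supnonunisoftout\ instance induced by some undirected unweighted connected graph $G=(\calC\cup\calF,E)$, and (as before) assume $|N_G(u)|\geq U_u$ for every facility $u\in\calF$ after the standard preprocessing that deletes facilities with too few neighbours. Then I would construct a \CCT\ $T=(\calF,E_T)$ rooted at $u^*$ via Lemma \ref{lm:cct}, together with the function $\xi:\calC\to\calF$ from Lemma \ref{lm:xi}, and run Procedure \emph{pass-up} (Algorithm \ref{alg:passup}) to obtain an open multiset $F=\{u_1,\dots,u_a\}$ with $|\calC|=aL+b$, an unassigned set $S\subseteq\xi^{-1}(u^*)$ with $|S|=b<L$, and a partial assignment $\phi:(\calC\setminus S)\to F$ satisfying the four properties of Lemma \ref{lm:passup}.

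The one genuinely new ingredient is the analogue of Lemma \ref{lm:matching} for soft capacities. Because capacities are soft, $F'$ may be a multiset, so I no longer need property 1 (that $F'$ be a single set) nor the matching-theoretic argument guaranteeing $F$ is perfectly matched; I only need a multiset $F'=\{u'_1,\dots,u'_a\}$ with $\dist_G(u_i,u'_i)\leq 6$ for all $i$ and $\sum_{i=1}^{a}U_{u'_i}\geq p$. To get this I would argue exactly as in the first half of the proof of Lemma \ref{lm:matching}: let $O\subseteq\calF$ be the optimal open set and $C'\subseteq\calC$ the clients it serves, so $|C'|\geq L|O|$ and hence $|C'\setminus S|>L(|O|-1)$; since $\dist_G(v,\phi(v))\leq 5$ for $v\in C'\setminus S$ and each $u_i\in F$ serves exactly $L$ clients, a counting/Hall-type argument shows one can injectively (even as a function on indices) assign to each $u_i$ a facility $u'_i\in O$ within distance $6$ whose upper bounds sum to at least the total optimal throughput $\geq p$. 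In the soft setting this is cleaner: I can simply build the bipartite graph $B=(F,O;E_B)$ with $(u_i,w)\in E_B$ iff $\dist_G(u_i,w)\leq 6$, weight $U_w$, observe via Hall's theorem (using the same $|C'\setminus S|>L|N_B(O')|$-versus-$\leq L|N_B(O')|$ contradiction) that $F$ can be matched into $O$, and keep only a maximum-weight such matching, whose weight is $\geq\sum_{w\in O}U_w\geq p$. Setting $u'_i$ to be the $O$-endpoint matched to $u_i$ gives $F'$.

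With $F'$ in hand I would finish exactly as in Theorem \ref{thm:11app}: reset the capacity upper bound of each $u_i$ to $U_{u'_i}$, apply Procedure \emph{pass-down} (Algorithm \ref{alg:passdown}) under these modified capacities to obtain by Lemma \ref{lm:passdown} a distance-$5$ solution $(C,F,\phi)$ serving at least $p$ clients (feasible since $\sum_i U_{u'_i}\geq p$), and then reassign every client $v$ with $\phi(v)=u_i$ to $u'_i$ instead. Each such reassignment moves a client by at most $\dist_G(u_i,u'_i)\leq 6$, so the final radius is at most $5+6=11$; soft capacities make the resulting $\phi$ feasible for $F'$ since each $u'_i$ now receives at most $U_{u'_i}$ clients and at least $L$ (the latter from pass-up, which assigns exactly $L$ to $u_i$ — note this uses $L_u=L$ uniform). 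Combining with Lemma \ref{lm:bipartite} yields the claimed $11$-approximation.

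The main obstacle is verifying the Hall-type condition for the soft version of Lemma \ref{lm:matching} — specifically pinning down that the lower bound $|C'\setminus S|>L(|O'|-1)$ and the per-center upper bound $|C'\setminus S|\leq L|N_B(O')|$ (which relies on every $v\in C'\setminus S$ having $\phi(v)\in N_B(O')$, i.e. $\dist_G(\phi(v),O')\leq 6$, via the triangle inequality $\dist_G(\phi(v),v)\leq 5$ plus $\dist_G(v,O')\leq 1$) really do contradict each other, and that dropping the ``single set'' requirement genuinely lets the argument go through for the supplier variant where $\calC$ and $\calF$ differ. Everything else is a routine transcription of the \nonuniout\ argument.
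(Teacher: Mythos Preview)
Your overall template matches the paper exactly: reduce to an induced instance, build the \CCT, run pass-up, replace $F$ by a nearby multiset $F'$ with $\sum_i U_{u'_i}\geq p$, run pass-down against the modified capacities, and shift each client by at most $6$ at the end. The paper's construction of $F'$ differs from yours --- it simply sets $u'_i=\arg\max_{u\in\calF:\dist_G(u,u_i)\leq 6}U_u$ and then proves $\sum_i U_{u'_i}\geq p$ by exhibiting a greedy injection $\sigma$ from the optimal centers $O$ into $\{1,\dots,a\}$ with $U_{u^*_j}\leq U_{u'_{\sigma(j)}}$ --- but a matching-based construction like yours can also be made to work.

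However, your matching argument has the direction reversed, and this is a genuine gap. The Hall condition you invoke (for every $O'\subseteq O$, $|N_B(O')|\geq |O'|$, via the $|C'\setminus S|>L(|O'|-1)$ versus $|C'\setminus S|\leq L|N_B(O')|$ contradiction) yields a matching saturating $O$, i.e.\ an injection $O\hookrightarrow F$, \emph{not} $F\hookrightarrow O$. In fact an injection $F\hookrightarrow O$ need not exist: each optimal center serves at least $L$ pairwise-disjoint clients, so $|O|\leq\lfloor|\calC|/L\rfloor=a=|F|$, and strict inequality is possible (e.g.\ $|\calC|=10$, $L=3$, $a=3$, with an optimum opening two centers serving five clients each). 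Consequently your sentence ``Setting $u'_i$ to be the $O$-endpoint matched to $u_i$ gives $F'$'' leaves some $u_i$ without a $u'_i$, and the inequality ``weight $\geq\sum_{w\in O}U_w$'' only holds for a matching saturating $O$, not one saturating $F$.

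The fix is easy and brings you back in line with the paper's injection $O\hookrightarrow F$: take the $O$-saturating matching guaranteed by Hall, set $u'_i$ to the matched $O$-vertex when $u_i$ is covered and $u'_i=u_i\in\calF$ otherwise. Then $\dist_G(u_i,u'_i)\leq 6$ for every $i$, and $\sum_i U_{u'_i}\geq\sum_{w\in O}U_w\geq p$ since the matched indices already contribute all of $\sum_{w\in O}U_w$ and the remaining terms are nonnegative. With this correction the rest of your argument goes through verbatim.
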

\begin{proof}

We only need to find another open facility set $F'$ satisfying Property 2 and 3 in Lemma \ref{lm:matching}. In fact, we simply define $u'_i=\arg\max_{u\in \calF:\dist_G(u,u_i)\leq 6} U_u$ for all $u_i\in F$. We only need to verify that $\sum_{i=1}^a U_{u'_i}\geq p$. Assume that the optimal open facility set is $O=\{u_1^*, u_2^*,\cdots, u^*_m\}$ and the optimal assignment function is $\phi^*$. W.l.o.g., we assume that $U_{u_1^*}\geq U_{u_2^*}\geq\cdots\geq U_{u_m^*}$. We only need to find an injection $\sigma:\{1,2,\cdots,w\}\rightarrow\{1,2,\cdots,a\}$ such that $U_{u_j^*}\leq U_{u'_{\sigma(j)}}$.

The injection can be found greedily.
Suppose $\sigma(1),\sigma(2),\cdots,\sigma(j-1)$ have been decided for some $1\leq j\leq w$, we want to decide $\sigma(j)$. Since each $u_i\in F$ only serves $L$ clients by $\phi$ and the unassigned client set $S$ is of size $b<L$, we have the following property by counting:  $$A:=\left(\bigcup_{i=1}^j\phi^{*-1}(u^*_i)\right)\setminus\left(S\cup\bigcup_{i=1}^{j-1}\phi^{-1}(u_{\sigma(i)})\right)\neq \emptyset.$$
Arbitrarily pick a client $v\in A$. Define $\sigma(j)=\phi(v)$. By the definition of $O$, there exists some $1\leq j'\leq j$ such that $\dist_G(v,u_{j'}^*)\leq 1$. Therefore, $\dist_G(u_{\sigma(j)},u_{j'}^*)\leq 6$. Thus, we have $U_{u_j^*}\leq U_{u_{j'}^*}\leq U_{u'_{\sigma(j)}}$ by the definition of $u'_{\sigma(j)}$. The proof is complete.
\end{proof}

\section{Capacitated $k$-Center with Two-Sided Bounds and Outliers}
\label{sec:algtree}

In this section, we study the capacitated $k$-center problems with two-sided bounds, with or without outliers, and give approximation algorithms. We consider the case that all vertices have a uniform capacity lower bound $L_v=L$, while the capacity upper bounds can be either uniform or non-uniform. Our goal is to propose approximation algorithms with constant approximation ratio.
Similar to \citep{an2015centrality,kociumaka2014constant}, we use the standard LP relaxation and the rounding procedure distance-$r$ transfer. We will first extend the distance-$r$ transfer procedure for two-sided bounds.

\subsection{LP Formulation}
We first give a natural LP relaxation for \ksupbnonuniout.
\begin{definition}

\label{def:disrelax}
(Distance-$r$ relaxation $\mathsf{LP}_{r}(G)$) Given an \ksupbnonuniout\ instance, the following feasibility $\mathsf{LP}_{r}(G)$ that fractionally verifies whether there exists a solution that assigns at least $p$ clients to an open center of distance at most $r$:
\begin{center}
\fbox{
$\begin{array}{ll}
0\leq x_{uv},y_u\leq 1,&\forall u\in\calF,v\in\calC;  \\
x_{uv}=0,&\mathrm{if\ } \dist(u,v)>r;\\
  x_{uv}\leq y_u,&\forall u\in \calF, v\in \calC; \\
\sum_{u\in \calF}y_u=k;&\\
\sum_{u\in\calF, v\in\calC}x_{uv}\geq p;&\\
\sum_{u\in \calF}x_{uv}\leq 1,&\forall v\in \calC;\\
 L_uy_u\leq\sum_{v\in \calC}x_{uv}\leq U_uy_u,&\forall u\in \calF.
\end{array}$}
\end{center}
Here $x_{uv}$ is called an assignment variable representing the fractional amount of assignment from client $v$ to center $u$, and $y_u$ is called the opening variable of $u\in \calF$. For convenience, we use $x,y$ to represent $\{x_{uv}\}_{u\in\calF,v\in \calC}$ and $\{y_u\}_{u\in\calF}$, respectively.
\end{definition}

By Definition \ref{def:bipartite}, $\mathsf{LP}_1(G)$ must have a feasible solution for any induced\ksupbnonuniout\ instance. Assume that we have a feasible fractional solution $(x,y)$ of $\mathsf{LP}_1(G)$. We want to obtain a distance-$\rho$ solution by rounding $(x,y)$. We then recall a rounding procedure called distance-$r$ transfer.

\subsection{Distance-r Transfer}
\label{sec:rounding}
We first extend the definition of distance-$r$ transfer proposed in \citep{an2015centrality,kociumaka2014constant} by adding the third condition. For a vertex $a\in \calC\cup \calF$ and a set $B\subseteq \calC\cup \calF$, we define $\dist(a,B)=\min_{b\in B}\dist(a,b)$.
\begin{definition}
\label{def:transfer}
Given an \ksupbnonuniout\ instance and $y\in\bbbr^{\calF}_{\geq 0}$, a vector $y'\in\bbbr^{\calF}_{\geq 0}$ is a distance-$r$ transfer of $y$ if
\begin{enumerate}
\item $\sum_{u\in \calF}y'_u=\sum_{u\in \calF}y_u$;
\item $\sum_{w\in \calF: \dist(w,W)\leq r}U_w y'_w\geq\sum_{u\in W}U_u y_u$ for all $W\subseteq \calF$;
\item $\sum_{w\in \calF: \dist(w,W)\leq r}L_w y_w\geq\sum_{u\in W}L_uy_u'$ for all $W\subseteq \calF$.
\end{enumerate}
If $y'$ is a characteristic vector of $F\subseteq \calF$, we say that $F$ is an integral distance-$r$ transfer of $y$.
\end{definition}

Recall that the first condition says that a transfer should not change the total number of open centers. By an argument using Hall's theorem as in \citep{an2015centrality,kociumaka2014constant}, the second condition is important for satisfying the capacity upper bounds. In this paper, we add the third condition to satisfy the capacity lower bounds. Like in \citep{an2015centrality,kociumaka2014constant}, an integral distance-$r$ transfer of the fractional solution of $\mathsf{LP}_{r}(G)$ already gives a distance-$(r+1)$ solution by the following lemma.

\begin{lemma}
\label{lm:r+1}
Given an \ksupbnonuniout\ problem, assume $(x,y)$ is a feasible solution of $\mathsf{LP}_1(G)$ and $F\subseteq \calF$ is an integral distance-$r$ transfer of $y$. Then one can find a distance-$(r+1)$ solution $(C,F,\phi)$ in polynomial time.
\end{lemma}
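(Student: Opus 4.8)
The plan is to take the feasible fractional solution $(x,y)$ of $\mathsf{LP}_1(G)$ together with the integral distance-$r$ transfer $F$ of $y$, and build a fractional assignment supported on pairs $(u,v)$ with $u\in F$ and $\dist(u,v)\le r+1$ that respects the capacity interval $[L_u,U_u]$ for each $u\in F$ and assigns total mass at least $p$; then round this fractional assignment integrally using an integrality argument for the associated bipartite transportation polytope. First I would set up the transportation-style network: on one side the facilities $u\in F$ with a demand/supply window $[L_u,U_u]$, on the other side the clients $v\in\calC$ each with capacity $1$, and an edge $(u,v)$ whenever $\dist(u,v)\le r+1$. The goal of this step is to produce a fractional point $z$ in this polytope with $\sum_{u,v} z_{uv}\ge p$.

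The construction of $z$ is where conditions 2 and 3 of Definition~\ref{def:transfer} come in, via Hall-type / max-flow--min-cut reasoning. Starting from the fractional solution $x_{uv}$ (which is supported on distances $\le 1$), I want to "redirect" the mass currently sent to a fractional center $u$ over to nearby integral centers $w\in F$ with $\dist(w,u)\le r$; since $x_{uv}=0$ for $\dist(u,v)>1$, after redirection the mass from $v$ lands on a center within distance $r+1$ of $v$. The second transfer condition, $\sum_{w\in F:\,\dist(w,W)\le r} U_w \ge \sum_{u\in W} U_u y_u$ for all $W\subseteq\calF$, guarantees (again by Hall's theorem, as in \citep{an2015centrality,kociumaka2014constant}) that one can route the upper-capacity-weighted mass so that no integral center $w$ receives more than $U_w$; the third condition, $\sum_{w\in F:\,\dist(w,W)\le r} L_w y_w \ge \sum_{u\in W} L_u$ for all $W\subseteq\calF$, symmetrically guarantees that each integral center can be filled to at least $L_w$. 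Concretely I would run two flow computations (or one combined one) — one enforcing the upper bounds, one the lower bounds — and combine them to obtain the fractional assignment $z$; the total assigned mass is preserved at $\ge p$ because condition 1 preserves the number of centers and the rerouting preserves total flow.

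The final step is rounding. The polytope of fractional assignments $z$ with $L_u\le\sum_v z_{uv}\le U_u$, $\sum_u z_{uv}\le 1$, $z_{uv}\ge 0$ supported on a fixed bipartite edge set, together with the single cardinality constraint $\sum_{u,v} z_{uv}\ge p$, is (an intersection of a transportation polytope with one extra inequality, hence) integral — its constraint matrix is an interval/network matrix and totally unimodular, and the extra $\ge p$ constraint can be handled by adding a dummy client absorbing the slack, or by noting any vertex has at most one fractional cycle which can be cancelled. So $z$ can be rounded to an integral $\phi$ without decreasing the total assignment below $p$ and without violating any $[L_u,U_u]$; setting $C=\{v:\phi(v)\text{ defined}\}$ gives $|C|\ge p$ and $\dist(v,\phi(v))\le r+1$, i.e. a distance-$(r+1)$ solution. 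All steps are polynomial (max-flow plus a rounding of a TU system).

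The main obstacle I expect is the simultaneous handling of the lower and upper capacity bounds during the fractional rerouting: the upper-bound Hall condition (condition 2) and the lower-bound Hall condition (condition 3) are each used in prior work in isolation, and here one must route a single fractional assignment that respects both windows $[L_w,U_w]$ at every integral center \emph{and} still carries $\ge p$ units of flow. Making these two routings compatible — e.g. by a careful single min-cost-flow formulation whose feasibility follows from combining the two conditions, rather than two separate arguments that might conflict — is the delicate part; the TU/integrality rounding at the end and the distance bookkeeping ($1$ from $x$ plus $r$ from the transfer) are routine by comparison.
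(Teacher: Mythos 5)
Your overall skeleton (assignments restricted to pairs with $\dist(u,v)\le r+1$, Hall-type arguments fed by conditions 2 and 3 of Definition~\ref{def:transfer} together with the LP constraints, then integral rounding of a bipartite system) parallels the paper's proof, and your distance bookkeeping ($1$ from the support of $x$ plus $r$ from the transfer) and the use of condition 1 for $|F|=k$ are fine. But the step you yourself flag as ``the delicate part'' --- producing a \emph{single} assignment that simultaneously meets every lower bound $L_u$, every upper bound $U_u$, and serves at least $p$ clients --- is precisely the content of the lemma, and your proposal does not actually prove it. ``Run two flow computations and combine them'' is not an argument: condition 3 (plus the LP constraint $L_uy_u\le\sum_v x_{uv}$ and the fact that $x$ is supported on distance $\le 1$) gives an assignment filling each $u\in F$ to at least $L_u$, and condition 2 gives an assignment of $\ge p$ clients respecting the upper bounds, but feasibility of the \emph{intersection} (windows $[L_u,U_u]$ \emph{and} total $\ge p$) does not follow from TU-ness or from the two feasibility statements separately; in general networks, imposing lower bounds can strictly decrease the maximum achievable flow, so the claim needs a structure-specific argument, which is missing. (Also, your transcription of condition 3 is off: the left-hand side ranges over all $w\in\calF$ weighted by the fractional $y_w$, not over $w\in F$, and it only becomes a Hall condition on clients after combining with the LP lower-bound constraint, as the paper does explicitly.)

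The paper closes exactly this gap with an augmenting-path invariant rather than a joint feasibility formulation. It builds the bipartite graph on edges with $\dist\le r+1$, duplicates each $u\in F$ first $L_u$ times to get a matching $M_1$ saturating all duplicates (Hall via condition 3 and the LP), then views $M_1$ inside the graph where each $u$ is duplicated $U_u$ times, in which a matching $M_2$ covering at least $p$ clients exists by condition 2 \citep[Lemma 9]{kociumaka2014constant}. Starting from $M_1$ it augments to a maximum-cardinality matching: augmenting paths never unmatch an already matched facility duplicate (equivalently, in flow language, a source-to-sink augmenting path never decreases any facility's throughput), so the lower bounds are preserved as an \emph{invariant} rather than imposed as constraints, the upper bounds hold automatically because only $U_u$ duplicates exist, and maximality gives at least $|M_2|\ge p$ matched clients. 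If you replace your vague ``combine'' step with this invariant-preserving augmentation (your flow formulation supports it verbatim), your proof becomes essentially the paper's; without it, the feasibility of the windows-plus-$\ge p$ system that your TU rounding is applied to is unproven.
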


\begin{proof}
Consider a bipartite graph $H=(\calC,\calF,E_H)$ with $(v,u)\in E_H$ ($v\in \calC, u\in \calF$) if $\dist_G(v,u)\leq r+1$. Modify $H$ to obtain $H_1=(\calC,F_1,E_{H_1})$ by removing vertices from $\calF\setminus F$ and duplicating each vertex $u\in F$ to its capacity lower bound, i.e. $L_u$ times. Then we show that there exists a matching $M_1$ such that every vertex in $F_1$ is matched. By Hall's theorem, we need to prove that $|\{v\in\calC:\dist_{H'}(v,W)\leq r+1\}|\geq |W|$ for any $W\subseteq F_1$.
By the construction of $F_1$, we in fact only need to prove that for each $W\subseteq F$, we have the following:
$$|\{v\in\calC:\dist_G(v,W)\leq r+1\}|\geq \sum_{u\in W}L_u.$$
By the third condition of Definition \ref{def:transfer}, we know $\sum_{u\in W}L_u\leq\sum_{w\in\calF: \dist_G(w,W)\leq r}L_wy_w$. Then by the LP constraint, we have the following inequality:

\begin{eqnarray*}
& &\sum_{u\in W}L_u\leq\sum_{w\in\calF: \dist_G(w,W)\leq r}L_wy_w\leq \sum_{w\in\calF:\dist_G(w,W)\leq r}\sum_{v\in\calC: \dist_G(w,v)\leq 1}x_{wv} \\
&\leq&\sum_{v\in\calC: \dist_G(v,W)\leq r+1}\sum_{w\in\calF}x_{wv}\leq \sum_{v\in\calC: \dist_G(v,W)\leq r+1}\sum_{w\in\calF}1=|\{v\in\calC:\dist_G(v,W)\leq r+1\}|.
\end{eqnarray*}

We then modify $H$ to obtain another bipartite graph $H_2=(\calC,F_2,E_{H_2})$ by removing vertices from $\calF\setminus F$ and duplicating each vertex $u\in F$ to its capacity upper bound, i.e. $U_u$ times. Since $L_u\leq U_u$ for all $u\in \calF$, we can consider $F_1$ as a subset of $F_2$. Moreover, we can consider $M_1$ as a matching of $H_2$ satisfying that for each $u\in F$, there are exactly $L_u$ duplicates of $u$ to be matched. On the other hand, there exists a matching $M_2$ of $H_2$ satisfying that there are at least $p$ clients $v\in \calC$ that are matched \citep[Lemma 9]{kociumaka2014constant}. Finally, we show that it is possible to modify $M_1$ to be a matching with the same property as $M_2$. Starting with $M_1$, we iteratively find an augmenting path on $H_2$ such that there is one more client that are matched. We stop until there are at least $p$ matched clients. The feasibility is not hard by Hungarian Algorithm. Since any augmenting path can not make a matched vertex unmatched, we conclude that for each $u\in F$ there are at least $L_u$ matched duplicates. This matching can be found in polynomial time, and $|F|=k$ follows from the first condition of Definition \ref{def:transfer}.
\end{proof}

Note that the capacity lower bound is uniform, we have the following lemma for distance-$r$ transfer.

\begin{lemma}
\label{lm:luni}
Given an \ksupnonuniout\ instance and $y\in\bbbr^{\calF}_{\geq 0}$, assume that a vector $y'\in\bbbr^{\calF}_{\geq 0}$ satisfies the following conditions:
\begin{enumerate}
\item $\sum_{u\in \calF}y'_u=\sum_{u\in \calF}y_u$;
\item $\sum_{w\in \calF: \dist(w,W)\leq r}U_w y'_w\geq\sum_{u\in W}U_u y_u$ for all $W\subseteq \calF$;
\item There exists a function $g:\calF\times \calF\rightarrow \R_{\geq 0}$ satisfying the following conditions:
    \begin{enumerate}
    \item for all $u,w\in \calF$, if $\dist(u,w)>r$, we have $g(u,w)=0$;
    \item for all $u\in \calF, y_u=\sum_{w\in \calF}g(u,w),y'_u=\sum_{w\in \calF}g(w,u)$.
    \end{enumerate}
\end{enumerate}
Then $y'$ is a distance-$r$ transfer of $y$. We call this $y'$ a local distance-$r$ transfer of $y$.
\end{lemma}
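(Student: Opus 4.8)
The plan is to verify that the three conditions defining a distance-$r$ transfer (Definition~\ref{def:transfer}) follow from the three hypotheses of the lemma. The first condition is literally hypothesis~(1), so there is nothing to do there. The second condition is literally hypothesis~(2), so again nothing to do. Hence the entire content of the proof is to derive the third condition of Definition~\ref{def:transfer}, namely $\sum_{w\in\calF:\dist(w,W)\leq r}L_w y_w\geq\sum_{u\in W}L_u y'_u$ for every $W\subseteq\calF$, from the existence of the transport function $g$ in hypothesis~(3), using crucially that the lower bounds are uniform, $L_u=L$ for all $u$.

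First I would fix an arbitrary $W\subseteq\calF$ and rewrite the right-hand side using $L_u=L$ and the decomposition $y'_u=\sum_{w\in\calF}g(w,u)$ from (3b): $\sum_{u\in W}L_u y'_u = L\sum_{u\in W}\sum_{w\in\calF}g(w,u)$. Next I would observe that by (3a), the only $w$ that contribute to this double sum are those with $\dist(w,u)\leq r$ for some $u\in W$, i.e.\ those with $\dist(w,W)\leq r$; so the sum equals $L\sum_{w:\dist(w,W)\leq r}\sum_{u\in W}g(w,u)$. Then, bounding $\sum_{u\in W}g(w,u)\leq\sum_{u\in\calF}g(w,u)=y_w$ (the last equality is the first half of (3b), applied with the roles of the arguments as $g(w,\cdot)$ summing to $y_w$), we get $\sum_{u\in W}L_u y'_u\leq L\sum_{w:\dist(w,W)\leq r}y_w=\sum_{w:\dist(w,W)\leq r}L_w y_w$, which is exactly the third condition. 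Since $W$ was arbitrary, this establishes that $y'$ is a distance-$r$ transfer.

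The only subtlety to be careful about is the direction of the inequality and which of the two marginal identities in (3b) to invoke at which point: we use $y'_u=\sum_w g(w,u)$ to expand the right-hand side and $y_w=\sum_u g(w,u)$ (the "$y_w=\sum_{w'}g(w,w')$" identity with $w$ in the first slot) to collapse the left-hand side, and the support condition (3a) to restrict the range of summation to $\{w:\dist(w,W)\leq r\}$. Uniformity of $L$ is what lets us pull the common factor $L$ out of both sums so that a bound on $\sum_w y_w$-type quantities transfers directly to a bound on $\sum_w L_w y_w$-type quantities; without it one would need $g$ to respect the weights $L_w$, which is not assumed. I do not expect any real obstacle here — the statement is essentially a bookkeeping lemma saying that a flow-based (transport) certificate of the lower-bound condition implies the set-based (Hall-type) condition — so the "hard part", such as it is, is simply keeping the indices and the inequality direction straight.
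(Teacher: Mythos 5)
Your proposal is correct and follows essentially the same route as the paper's proof: expand $\sum_{u\in W}y'_u$ via the marginal identity $y'_u=\sum_{w}g(w,u)$, use the support condition (3a) to restrict the summation to $\{w:\dist(w,W)\leq r\}$, swap the order of summation, and bound $\sum_{u\in W}g(w,u)$ by $y_w$, with uniformity of $L$ letting the factor $L$ cancel on both sides. (If anything, your version is slightly more careful than the paper's, which writes the final bounding step as an equality when it is in fact an inequality in the needed direction.)
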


\begin{proof}
By Definition \ref{def:transfer}, we only need to verify that $\sum_{w\in \calF: \dist(w,W)\leq r}L_w y_w\geq\sum_{u\in W}L_uy_u'$ for all $W\subseteq \calF$. Since the capacity lower bound is uniform, we can simplify the above inequality as follows: $\sum_{w\in \calF: \dist(w,W)\leq r}y_w\geq\sum_{u\in W}y_u'$. We finish the proof by the following argument:
\begin{eqnarray*}
\sum_{u\in W}y_u'&=&\sum_{u\in W} \sum_{w\in \calF} g(w,u)=\sum_{u\in W} \sum_{w\in \calF:\dist(w,u)\leq r} g(w,u) \leq \sum_{u\in W} \sum_{w\in \calF:\dist(w,W)\leq r} g(w,u)\\
&=&\sum_{w\in \calF:\dist(w,W)\leq r}\sum_{u\in W}g(w,u)=\sum_{w\in \calF: \dist(w,W)\leq r}y_w.
\end{eqnarray*}
The second equality follows from condition 3a.
\end{proof}

The function $g$ in the above lemma can be considered as a flow from the facility set $\calF$ to its duplicate $\calF'$, with the property that no flow enters $w\in \calF'$ from any facility $u\in \calF$ such that $\dist(u,w)>r$. In other words, such a distance-$r$ transfer does not make a transfer of opening variables from some facility $u\in \calF$ to a 'too far' facility $w\in \calF$ satisfying that $\dist(u,w)>r$.

Similarly, if the capacity upper bound is uniform, we have the following lemma which may give some intuition for solving the ($\{L_u\}$,$U$,$k$,$p$)-\textsc{Supplier} problems in the future.

\begin{lemma}
\label{lm:uuni}
Given an ($\{L_u\}$,$U$,$k$,$p$)-\textsc{Supplier} instance and $y\in\bbbr^{\calF}_{\geq 0}$, assume that a vector $y'\in\bbbr^{\calF}_{\geq 0}$ satisfies the following conditions:
\begin{enumerate}
\item $\sum_{u\in \calF}y'_u=\sum_{u\in \calF}y_u$;
\item $\sum_{w\in \calF: \dist(w,W)\leq r}L_w y_w\geq\sum_{u\in W}L_uy_u'$ for all $W\subseteq \calF$
\item There exists a function $g:\calF\times \calF\rightarrow \R_{\geq 0}$ satisfying the following conditions:
    \begin{enumerate}
    \item $\forall u,w\in \calF$, if $\dist(u,w)>r$, we have $g(u,w)=0$;
    \item $\forall u\in \calF, y_u=\sum_{w\in \calF}g(u,w),y'_u=\sum_{w\in \calF}g(w,u)$.
    \end{enumerate}
\end{enumerate}
Then $y'$ is a distance-$r$ transfer of $y$.
\end{lemma}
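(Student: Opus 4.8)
The plan is to follow the proof of Lemma~\ref{lm:luni} essentially verbatim, with the roles of the lower and upper capacity bounds interchanged. Hypotheses~1 and~2 already supply, word for word, the first and third requirements in Definition~\ref{def:transfer} (the total-count identity and the lower-bound transfer inequality). Hence the only thing left to check is the second requirement of Definition~\ref{def:transfer}, namely $\sum_{w\in\calF:\dist(w,W)\le r}U_w y'_w\ge\sum_{u\in W}U_u y_u$ for every $W\subseteq\calF$.

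Because the upper bound is uniform, $U_u=U$ for all $u\in\calF$, so this target inequality is equivalent to $\sum_{w\in\calF:\dist(w,W)\le r}y'_w\ge\sum_{u\in W}y_u$. I would prove it by invoking the flow $g$ from hypothesis~3. Starting from $\sum_{u\in W}y_u=\sum_{u\in W}\sum_{w\in\calF}g(u,w)$ (hypothesis~3b), condition~3a restricts the inner sum to $w$ with $\dist(u,w)\le r$; for such $w$ and $u\in W$ one has $\dist(w,W)\le\dist(w,u)\le r$, so the index set may be enlarged to $\{w:\dist(w,W)\le r\}$ at the cost of only an inequality (all values of $g$ are nonnegative). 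Swapping the order of summation, dropping the restriction $u\in W$ (again using nonnegativity of $g$), and applying hypothesis~3b once more in the form $y'_w=\sum_{u\in\calF}g(u,w)$ yields exactly $\sum_{w\in\calF:\dist(w,W)\le r}y'_w$, completing the chain.

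I do not expect any real obstacle: the argument is the line-by-line transpose of the displayed computation in the proof of Lemma~\ref{lm:luni}, obtained by swapping $y$ with $y'$, swapping the two marginals of $g$, and replacing ``lower'' by ``upper''. The one place that deserves an explicit sentence is the reduction to the unweighted inequality, which is valid precisely because $U$ is uniform; and one should note that nonnegativity of $g$ is what licenses the two enlargements of the summation ranges.
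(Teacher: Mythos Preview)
Your proposal is correct and is exactly the approach the paper intends: the paper does not write out a separate proof for Lemma~\ref{lm:uuni} but simply introduces it with ``Similarly,'' deferring to the computation in the proof of Lemma~\ref{lm:luni} with the roles of $y$ and $y'$ (and of upper and lower bounds) swapped. Your write-up is in fact slightly more careful than the paper's displayed chain for Lemma~\ref{lm:luni}, since you make explicit the step where the inner sum over $u\in W$ is enlarged to $u\in\calF$ before applying hypothesis~3b.
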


\subsection{Capacitated $k$-Center with Two-Sided Bounds and Outliers}
\label{sec:kcenter}

Now we are ready to solve the \ksupnonuniout\ problem.
By Lemma \ref{lm:r+1} and \ref{lm:luni}, we only need to find an integral local distance-$r$ transfer. In \citep{an2015centrality,kociumaka2014constant}, there exists a polynomial time algorithm that finds an integral distance-$r$ transfer for the ($L$,$\{U_u\}$,$k$,$p$)-\textsc{Supplier} problem with capacity lower bound $L=0$. Their algorithm can be naturally generalized to the case that $L>0$. Thus, we only need to show that their rounding scheme also finds a local distance-$r$ transfer.
In this section, we briefly state the rounding schemes for the $k$-center problem with only capacity upper bounds in previous work \citep{an2015centrality,kociumaka2014constant}, and show that they find integral local distance-$r$ transfers. We first have the following observation.

\begin{observation}
\label{ob:local}
Given an \ksupnonuniout\ instance, let $y,y',y''\in\bbbr^{\calF}_{\geq 0}$. Assume $y'$ is a local distance-$r$ transfer of $y$ and $y''$ is a local distance-$r'$ transfer of $y'$. Then $y''$ is a local distance-$(r+r')$ transfer.
\end{observation}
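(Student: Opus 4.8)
The plan is to unwind the definition of ``local distance-$r$ transfer'' (Lemma~\ref{lm:luni}): such a transfer of $y$ into $y'$ is certified by a flow function $g:\calF\times\calF\to\R_{\geq 0}$ that is supported on pairs at distance $\le r$, has out-marginals $y$ and in-marginals $y'$, and whose induced vector $y'$ additionally satisfies the upper-bound condition (condition 2 of Lemma~\ref{lm:luni}). So from the hypotheses I obtain a flow $g$ certifying that $y'$ is a local distance-$r$ transfer of $y$ and a flow $h$ certifying that $y''$ is a local distance-$r'$ transfer of $y'$. The natural candidate certificate for $y''$ being a local distance-$(r+r')$ transfer of $y$ is the ``composition'' of these two flows, i.e. routing $g$ into the intermediate node $y'$ and then onward via $h$.

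The key steps, in order: First, define the composed flow $f:\calF\times\calF\to\R_{\geq0}$ by $f(u,w)=\sum_{v\in\calF}\tfrac{g(u,v)\,h(v,w)}{y'_v}$, with the convention that terms with $y'_v=0$ are omitted (if $y'_v=0$ then $g(u,v)=0$ for all $u$ since the in-marginals of $g$ at $v$ are $y'_v$, so nothing is lost). Second, check the marginals: $\sum_w f(u,w)=\sum_v \tfrac{g(u,v)}{y'_v}\sum_w h(v,w)=\sum_v \tfrac{g(u,v)}{y'_v}\,y'_v=\sum_v g(u,v)=y_u$, and symmetrically $\sum_u f(u,w)=\sum_v \tfrac{h(v,w)}{y'_v}\sum_u g(u,v)=\sum_v h(v,w)=y''_w$. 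Third, check the support condition: if $f(u,w)>0$ then some $v$ has $g(u,v)>0$ and $h(v,w)>0$, hence $\dist(u,v)\le r$ and $\dist(v,w)\le r'$, so by the triangle inequality $\dist(u,w)\le r+r'$; thus $f$ is supported on pairs at distance $\le r+r'$. This establishes condition 3 of Lemma~\ref{lm:luni} for $y''$ as a transfer of $y$ with parameter $r+r'$, and condition 1 (total mass preservation) is immediate by summing either marginal. Fourth, verify condition 2 of Lemma~\ref{lm:luni}, the upper-bound inequality: for any $W\subseteq\calF$, applying condition 2 for $h$ to $W$ and then condition 2 for $g$ to the set $W_1=\{v:\dist(v,W)\le r'\}$ gives $\sum_{w:\dist(w,W)\le r+r'}U_w y''_w \ge \sum_{v\in W_1}U_v y'_v \ge \sum_{u\in W}U_u y_u$, where the first step uses that $\{w:\dist(w,W_1)\le r\}\subseteq\{w:\dist(w,W)\le r+r'\}$ by the triangle inequality. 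Together these four checks show $y''$ is a local distance-$(r+r')$ transfer of $y$.

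I do not expect a serious obstacle here; this is essentially the statement that ``transfer'' is closed under composition, and the composed-flow construction is the standard trick. The only point requiring a little care is the degenerate case $y'_v=0$ in the definition of $f$ — one must note that the in-marginal of $g$ at $v$ equals $y'_v$, which forces $g(u,v)=0$ for every $u$, so dividing by $y'_v$ never actually occurs on a nonzero numerator and the convention is harmless. A second minor point is making sure the nesting of distance-balls behaves correctly under the triangle inequality in step four; writing $W_1=\{v:\dist(v,W)\le r'\}$ explicitly and checking $\{w:\dist(w,W_1)\le r\}\subseteq\{w:\dist(w,W)\le r+r'\}$ handles it cleanly. (One could alternatively prove the observation without exhibiting $f$, by composing the abstract conditions 1--3 of Definition~\ref{def:transfer} directly; but going through the flow $g$ is both shorter and matches the ``local'' refinement we actually need for the rounding in Section~\ref{sec:kcenter}.)
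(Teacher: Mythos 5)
Your argument is correct, and in fact the paper offers no proof at all for this observation (it is stated as immediate), so the composed-flow certificate $f(u,w)=\sum_{v}g(u,v)h(v,w)/y'_v$ you construct is exactly the kind of verification the authors leave implicit; your handling of the degenerate case $y'_v=0$ via the in-marginal of $g$ (and, equally, the out-marginal of $h$) is the right way to make the division harmless. The only blemish is a swap of radii in the condition-2 step: as written you set $W_1=\{v:\dist(v,W)\le r'\}$ and invoke the containment $\{w:\dist(w,W_1)\le r\}\subseteq\{w:\dist(w,W)\le r+r'\}$, but the chain only parses if $W_1=\{v:\dist(v,W)\le r\}$ (the $r$-ball, coming from condition 2 for $g$ applied to $W$), after which condition 2 for $h$ applied to $W_1$ with ball radius $r'$ and the containment $\{w:\dist(w,W_1)\le r'\}\subseteq\{w:\dist(w,W)\le r+r'\}$ give the claimed inequality. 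This is a labeling slip rather than a gap; with the radii matched correctly the proof is complete.
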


Though the authors consider many variants in \citep{an2015centrality,kociumaka2014constant}, the rounding schemes are quite similar. We first take the (0,$\{U_u\}$,$k$)-\textsc{Center} problem as an example. \cite{an2015centrality} proposed a rounding scheme for this problem by reducing to a tree instance defined as follows.

\begin{definition} (Tree instance)
\label{def:tree}
A tree instance is defined as a tuple $(T,\{L_v\},\{U_v\},y)$, where $T=(V,E)$ is a rooted tree with a capacity interval $[L_v,U_v]$ for each facility $v\in V$ ($L_v,U_v\in \Z_{\geq 0}$ and $L_v\leq U_v$),
a distance function $\dist_T: V\times V\rightarrow\bbbr_{\geq 0}$ induced by $T$, and opening variables $y\in [0,1]^V$ satisfying that $\sum_{u\in V}y_v$ is an integer and $y_v=1$ for every non-leaf facility $v\in V$.
\end{definition}

The rounding procedure is as follows.
\begin{enumerate}
\item Find a fractional solution $(x,y)$ of the standard LP.
\item Construct a distance-1 transfer $y'$ of $y$ by aggregating one opening unit to each vertex in a set $V$ from its neighbors.
\footnote{Here, $V$ is a set of auxiliary vertices augmenting to the original instance. See \citep{an2015centrality} for more details.}
 Based on $y'$, construct a tree instance $(T=(V,E_T),0,\{U_u\},y')$.
\item Find an integral distance-2 transfer $y''$ of $y'$ based on the tree induced distance by \citep[Lemma 9]{an2015centrality}. Since the tree $T$ satisfies that the original distance between any two adjacent vertices is at most 3, $y''$ is also a distance-6 transfer of $y'$ based on the original distance.
\item Finally, find an open set $F\in \calF$ which is also a distance-1 transfer of $y''$. By the transitivity of distance-$r$ transfer, $F$ is a distance-8 transfer of $y$. By Lemma \ref{lm:r+1}, $F$ is a distance-9 solution.
\end{enumerate}

Our goal is to show that the above rounding scheme can be generalized to the case that $L>0$. We use exactly the same rounding scheme except that the initial fractional solution $(x,y)$ should satisfy all capacity lower bounds. We then find an open set $F$ as an integral distance-8 transfer of $y$, and we want to show that $F$ is also local. It is not hard to verify the local property is satisfied in Step 2 and Step 4. Thus, we only need to check Step 2 for the tree instance. In fact, we have the following lemma.

\begin{lemma}
\label{lm:treeinstance}
Given a tree instance $(T,L,\{U_u\},y)$, one can find in polynomial time an integral local distance-$2$ transfer of $y$.
\end{lemma}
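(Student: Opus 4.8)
The plan is to reopen the tree-rounding procedure of \citep{an2015centrality,kociumaka2014constant} (their ``Lemma~9''-type statement) rather than use it as a black box, and to show that when the lower bound is uniform the integral distance-$2$ transfer it produces on a tree instance already carries the flow certificate demanded by Lemma~\ref{lm:luni}. First I would recall that, run on a tree instance $(T,L,\{U_u\},y)$ and extended to $L>0$ as indicated in Section~\ref{sec:kcenter}, that algorithm outputs in polynomial time the characteristic vector $y''$ of a set $F\subseteq V$ with $|F|=\sum_v y_v$ (condition~1) and satisfying condition~2 of Definition~\ref{def:transfer} with $r=2$ under the tree metric $\dist_T$. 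What remains is to exhibit $g:V\times V\to\R_{\ge 0}$ with $g(u,w)=0$ whenever $\dist_T(u,w)>2$, $y_u=\sum_w g(u,w)$ and $y''_u=\sum_w g(w,u)$. By the transportation-feasibility theorem (the deficiency form of Hall's theorem), such a $g$ exists precisely when $\sum_v y_v=\sum_v y''_v$, which we have, and
\[
\sum_{w\in V:\ \dist_T(w,W)\le 2} y_w\ \ge\ \sum_{u\in W} y''_u\qquad\text{for all }W\subseteq V,
\]
i.e.\ exactly the uniform-$L$ specialization of condition~3 of Definition~\ref{def:transfer}. So the lemma reduces to verifying this one family of inequalities for the rounding's output, or equivalently to producing $g$ directly.

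To get $g$ I would track where opening mass travels during the rounding. Since every non-leaf of $T$ has $y_v=1$, all fractional opening sits on the leaves, and the route I would take is to organize the rounding as a single bottom-up sweep over the ``lowest fans'' of $T$ --- an internal node together with its leaf children: within a fan, consolidate the leaf openings among the sibling leaves (rounding the number of opened leaves down or up) and push any residual fraction into the already-open parent, finishing with one consolidation step at the root. Recording, for each origin $u$ and destination $w$, the total mass this process moves from $u$ to $w$ yields a nonnegative $g$ with the correct row and column sums; and since leaf siblings are at tree-distance exactly $2$, a leaf and its parent at distance $1$, and the root cleanup again involves only vertices within tree-distance $2$, the support of $g$ lies in $\{(u,w):\dist_T(u,w)\le 2\}$. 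Feeding $g$ into Lemma~\ref{lm:luni} (together with conditions~1 and~2 inherited from the base algorithm) certifies $F$ as an integral local distance-$2$ transfer of $y$, and polynomial running time is inherited.

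The crux is making sure the relocations do not \emph{chain}: if mass were forwarded from a leaf to a sibling and then onward to a further vertex, the net displacement could reach $4$ or more tree-hops and destroy the distance-$2$ support. This is exactly why I would insist on the fan-by-fan formulation above, in which every elementary move goes directly from its origin to its final destination (each within $\dist_T\le 2$) and is never re-forwarded, so the bundled $g$ is automatically distance-$2$ with no composition argument needed; the alternative would be to verify that the algorithm of \citep{an2015centrality,kociumaka2014constant} as written never chains its distance-$2$ moves. The remaining work --- reconciling the localized sweep with the global count $\sum_v y_v$ and checking that it still meets condition~2 of Definition~\ref{def:transfer} (the upper-bound side), using the LP constraints in the manner of \citep{an2015centrality,kociumaka2014constant} --- is routine, and is where the rest of the effort goes.
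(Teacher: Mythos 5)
Your proposal is correct and takes essentially the same route as the paper: the paper also opens up the tree recursion of \citep{an2015centrality}, observes that every step acts inside a ``fan'' (a lowest non-leaf vertex $r$ together with its leaf children), whose diameter is at most $2$, so any transportation plan inside the fan supplies the flow $g$ required by Lemma~\ref{lm:luni}, and these fan-local flows combine across the recursion with no chaining. The only differences are bookkeeping: the paper makes each fan's total integral by splitting the parent's unit ($y_r=y^1_r+y^2_r$, so the leftover $y^2_r$ travels later as part of $r$'s own mass) rather than pushing leaf residual upward into the parent, and it spells out the $U_r$ versus $U_{u^*}$ case analysis that you defer to the cited works for condition~2 of Definition~\ref{def:transfer}.
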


\begin{proof} We recall the rounding scheme of tree instances in \citep{an2015centrality}.

If $T$ only consists of at most $1$ vertex, then $y$ is an integral vector itself, yielding an integral local distance-1 transfer of $y$.

If $T$ consists of more than $1$ vertex, we construct a distance-$2$ transfer by a recursion. We first find a non-leaf vertex $r\in T$ whose children are all leaves. We use $T_1=(V_1,E_1)$ to denote the subtree rooted at $r$, and use $T_2=(V_2,E_2)$ to denote the subtree $(T\setminus T_1)\cup \{r\}$ by deleting all children of $r$ from $T$. Define $y^1\in [0,1]^{V_1}$ and $y^2\in [0,1]^{V_2}$ as follows: $y^1|_{V_1\setminus \{r\}}=y|_{V_1\setminus \{r\}}$, $y^2|_{V_2\setminus\{r\}}=y|_{V_2\setminus\{r\}}$ and $y_r=y^1_r+y^2_r$ where $0\leq y^1_r<1$ and $\sum_{u\in V_1}y^1_u$ is an integer. Since $y_r=1$, we can always find such values $y^1_r$ and $y^2_r$. Define $Y=\sum_{u\in V_1}y^1_u$.
By the above definition, we observe that $Y$ is an integer in $[0,|V_1|-1]$. We use $F_1$ to denote the collection of $Y$ vertices among the children of $r$ of highest capacity upper bounds. We use $u^*$ to denote the child of $r$ of the $Y$-th highest capacity upper bound. Note that $u^*\in F_1$. We consider the following cases.
\begin{enumerate}
\item If $U_r<U_{u^*}$, then $F_1$ is an integral distance-$2$ transfer of $y_1$ by \citep[Lemma 9]{an2015centrality}. To prove that $F_1$ is also a local distance-$2$ transfer of $y_1$, we only need to verify the third condition in Lemma \ref{lm:luni}. Note that $|F_1|=Y$. There must exist a function $g:V_1\times V_1\rightarrow \R_{\geq 0}$ satisfying condition 3b. On the other hand, the diameter of the tree $T_1$ is at most 2. So the function $g$ must satisfy condition 3a, which finishes the proof. Then the algorithm recursively solves a smaller tree instance and obtains $F_2\subseteq V_2$ as an integral local distance-$2$ transfer of $y_2$. Since $F_1\cap F_2=\emptyset$, we conclude that $F:=F_1\cup F_2$ is an integral local distance-$2$ transfer of $y$.
\item If $U_r= U_{u^*}$, then both $F_1$ and $F'_1:=(F_1\cup\{r\})\setminus\{u^*\}$ are integral local distance-$2$ transfers of $y_1$ by the same argument as in the first case. In this case, we first recursively solves a smaller tree instance and obtains $F_2\subseteq V_2$ as an integral local distance-$2$ transfer of $y_2$. Then we define $$F:=\left\{\begin{array}{ll}F_1\cup F_2,&r\in F_2\\ F_1'\cup F_2,&r\notin F_2\end{array}\right.,$$
    which is an integral local distance-$2$ transfer of $y$.
\item If $U_r>U_{u^*}$, we modify the capacity upper bound $U_r$ by $U_r\leftarrow U_{u^*}$ and reduce to the second case. In fact, we can similify the argument of this case in \citep{an2015centrality}. We only need to verify the second condition in Definition \ref{def:transfer}. By the seond case, $y'_r=1$ always holds, i.e. $r\in F$. Since $F$ is an integral distance-$2$ transfer of $y$ by replacing $U_r$ by a smaller value $U_{u^*}$, we can check that the second condition in Definition \ref{def:transfer} still holds for the original capacity $U_r$.
\end{enumerate}
 \end{proof}

Thus, we directly have the following theorem by \cite{an2015centrality}.

\begin{theorem}
\label{lm:9app}
There is a polynomial time 9-approximation algorithm for the \knonuni\  problem. For the uniform capacity upper bound version, the ($L$,$U$,$k$)-\textsc{Center} problem admits a 6-approximation.
\end{theorem}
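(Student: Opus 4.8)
The plan is to take the four-step LP-rounding scheme that \cite{an2015centrality} designed for the $(0,\{U_u\},k)$-\textsc{Center} problem and show that it still goes through — and still produces a \emph{local} transfer — once a uniform capacity lower bound $L>0$ is present. By Lemma~\ref{lm:bipartite} it suffices to treat an induced \knonuni\ instance; since in the center problem every vertex is both a client and a potential facility, instantiating $\calC=\calF=V$ and $p=|V|$ makes this an induced \ksupnonuniout\ instance, and by Definition~\ref{def:bipartite} the feasibility LP $\mathsf{LP}_1(G)$ — now carrying the lower-bound constraints $L_uy_u\le\sum_{v}x_{uv}$ — has a fractional solution $(x,y)$. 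From $(x,y)$ the scheme (i) aggregates one opening unit onto each auxiliary vertex from a true neighbour, giving a distance-$1$ transfer $y'$ of $y$ together with a tree instance $(T,L,\{U_u\},y')$; (ii) produces an integral distance-$2$ transfer $y''$ of $y'$ in the tree metric; (iii) performs one more distance-$1$ transfer to round $y''$ to the characteristic vector of an open set $F\subseteq\calF$. Since tree-adjacent vertices lie at original distance at most $3$, step (ii) is a distance-$6$ transfer of $y'$ in the original metric, so $F$ is an integral distance-$(1+6+1)=8$ transfer of $y$, and Lemma~\ref{lm:r+1} converts it into a distance-$9$ solution $(C,F,\phi)$; undoing the reduction of Lemma~\ref{lm:bipartite} gives the claimed $9$-approximation.

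The one genuinely new thing to verify is that each of the three transfers above is \emph{local} in the sense of Lemma~\ref{lm:luni}, since the third (lower-bound) condition of Definition~\ref{def:transfer} is exactly what locality buys us. Steps (i) and (iii) only move opening mass between vertices at distance $1$, so the flow $g$ they induce already witnesses locality at radius $1$. For step (ii), Lemma~\ref{lm:treeinstance} gives an integral \emph{local} distance-$2$ transfer relative to the tree metric; and since two vertices at original distance greater than $6$ are necessarily at tree distance greater than $2$, the same flow $g$ witnesses locality at radius $6$ in the original metric. Composing the three steps through Observation~\ref{ob:local} shows $F$ is an integral local distance-$8$ transfer of $y$, hence in particular a distance-$8$ transfer, so Lemma~\ref{lm:r+1} applies exactly as stated. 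This proves the first assertion.

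For the uniform upper bound case, the tree-rounding step of \cite{an2015centrality} becomes insensitive to the capacities — when all $U_u$ are equal, the choice of which $Y$ children of a node to open is unconstrained, so the delicate case analysis of Lemma~\ref{lm:treeinstance} collapses — and their analysis then yields an integral local transfer of distance $5$ rather than $8$. Feeding this into Lemma~\ref{lm:r+1} produces a distance-$6$ solution, so ($L$,$U$,$k$)-\textsc{Center} admits a $6$-approximation.

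I expect the work to be essentially bookkeeping rather than a new idea: nothing in the \cite{an2015centrality} scheme was built with a lower bound in mind, so the task is to check that ``local'' propagates through every step — the only mildly delicate point being the passage from tree-metric locality to original-metric locality via the bound $3$ on tree-edge lengths — while the single substantive ingredient, the local rounding of a tree instance, is already handled by Lemma~\ref{lm:treeinstance}. A secondary item is simply confirming that $\mathsf{LP}_1(G)$ stays feasible after the lower-bound constraints are added, which is immediate from the definition of an induced instance.
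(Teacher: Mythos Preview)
Your proposal is correct and follows essentially the same approach as the paper: run the \cite{an2015centrality} four-step rounding scheme unchanged, verify that each step is a \emph{local} transfer (Steps (i) and (iii) trivially, Step (ii) via Lemma~\ref{lm:treeinstance}), compose via Observation~\ref{ob:local} to get an integral local distance-$8$ transfer, and finish with Lemma~\ref{lm:r+1}. The paper's own ``proof'' is literally the sentence ``Thus, we directly have the following theorem by \cite{an2015centrality}'' after the surrounding discussion establishes exactly the points you spell out; your write-up is a faithful expansion of that discussion. One minor caveat: your explanation of why the uniform-capacity case drops from $8$ to $5$ (that the child-selection ``collapses'') is a bit hand-wavy and not quite how \cite{an2015centrality} actually obtain the improvement, but the paper gives no more detail here than you do and simply defers to that reference.
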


\begin{theorem}
\label{lm:11supapp}
There is a polynomial time 13-approximation algorithm for the \ksupnonuni\  problem. For the uniform capacity upper bound version, the ($L$,$U$,$k$)-\textsc{Supplier} problem admits a 9-approximation.
\end{theorem}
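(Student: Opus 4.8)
The plan is to follow the rounding pipeline behind Theorem~\ref{lm:9app}, redoing every distance estimate inside the bipartite induced graph. By Lemma~\ref{lm:bipartite} it suffices to handle an induced \ksupnonuni\ instance induced by an undirected unweighted connected $G=(\calC\cup\calF,E)$; there $\mathsf{LP}_1(G)$ is feasible, so fix a feasible fractional $(x,y)$. As in the center case, the target is to produce an \emph{integral local distance-$r$ transfer} $F\subseteq\calF$ of $y$ and then apply Lemma~\ref{lm:r+1}, which converts it into a distance-$(r+1)$ solution; insisting on the \emph{local} property (Lemma~\ref{lm:luni}) is exactly what makes the third condition of Definition~\ref{def:transfer}, and hence the uniform lower bound $L$, come for free. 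So the whole statement reduces to achieving $r=12$ in general and $r=8$ when the upper bounds are uniform.

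I would build $F$ in the same three phases as the scheme described before Theorem~\ref{lm:9app}, all over $\calF$: (i) aggregate the opening mass $y$ onto a sparse set of representative facilities so that each carries one full opening unit and the residual fractional mass sits on neighbouring facilities in $G^2$, yielding a local distance-$d_1$ transfer $y'$; (ii) organize the support of $y'$ into a \emph{tree instance} $(T,L,\{U_u\},y')$, where $T$ is (a contraction of) a \CCT\ of $G$ (Lemma~\ref{lm:cct}), so that tree-adjacent facilities lie at original distance $\leq 4$ --- a facility reaches another one only through clients --- and apply Lemma~\ref{lm:treeinstance} to get an \emph{integral} local distance-$2$ transfer $y''$ of $y'$ in the tree metric; (iii) if $y''$ is supported on auxiliary vertices rather than genuine facilities, perform one final short local transfer back onto $\calF$. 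By Observation~\ref{ob:local} the composition is an integral local distance-$(d_1+d_2+d_3)$ transfer of $y$. Triangle-inequality bookkeeping in the bipartite graph --- facility--facility distances are \emph{even}, so reaching a representative or crossing a \CCT\ edge costs $2$, and the distance-$2$ tree move of phase (ii) therefore costs $\leq 8$ --- gives $d_1+d_2+d_3 = 2+8+2 = 12$, hence via Lemma~\ref{lm:r+1} a distance-$13$ solution; when the upper bounds are uniform, phase (ii) only needs single-edge tree moves (as for ($L$,$U$,$k$)-\textsc{Center}), costing $\leq 4$, so the total drops to $2+4+2=8$ and we get a distance-$9$ solution. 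Every step is polynomial time (the \CCT\ by Lemma~\ref{lm:cct}, the tree rounding by Lemma~\ref{lm:treeinstance}, and the matching/flow computations inside Lemma~\ref{lm:r+1}).

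The main obstacle is not a new idea but the constant-chasing together with re-verifying that \emph{locality survives the coarser tree}. In the center case one may use a tree whose adjacent vertices are at distance $3$ and a one-step tree rounding; in the bipartite setting the tree is forced to have longer hops, so I would have to re-run all three cases in the proof of Lemma~\ref{lm:treeinstance} (the comparison of $U_r$ with $U_{u^*}$, and the sub-cases when they are equal or when $U_r$ is truncated) and check that the witnessing flow $g$ of Lemma~\ref{lm:luni} still satisfies $g(u,w)=0$ whenever $\dist(u,w)$ exceeds the enlarged tree-diameter bound. Pinning down the exact constants $13$ and $9$ --- rather than something slightly larger --- also needs care about which phase absorbs the ``$+1$'' of Lemma~\ref{lm:r+1} and about first pruning, via $U_u\leq|N_G(u)|$ and $|N_G[u]|\geq L$, all facilities that cannot be opened in any feasible solution, exactly as in Section~\ref{sec:cct}.
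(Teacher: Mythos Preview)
Your proposal is essentially the same approach the paper takes: the paper gives no separate proof of Theorem~\ref{lm:11supapp} at all, but simply observes that the supplier rounding of \cite{an2015centrality} uses only \emph{local} transfers (the aggregation steps trivially, the tree step by Lemma~\ref{lm:treeinstance}), so Lemma~\ref{lm:luni} makes the uniform lower bound come for free and the supplier constants $13$ and $9$ carry over unchanged.

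One point to tighten: you invoke the \CCT\ of Lemma~\ref{lm:cct} as the tree underlying the tree instance in phase~(ii). The \CCT\ belongs to the combinatorial pipeline of Section~\ref{sec:greedy}; the tree instance of Definition~\ref{def:tree} is a different object, built as in \cite{an2015centrality} after the aggregation step so that every non-leaf already has $y'_v=1$. The two constructions are similar in spirit (both rest on a distance-$2$ independent set of facilities), but to recover exactly the constants $13$ and $9$ you should follow An et al.'s supplier construction rather than graft the \CCT\ onto it; otherwise you would have to re-derive the adjacency bound and re-check the hypothesis of Lemma~\ref{lm:treeinstance}. With that substitution your three-phase accounting and the use of Observation~\ref{ob:local} and Lemma~\ref{lm:r+1} are exactly what the paper intends.
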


\topic{The case with outliers.}
For the (0,$\{U_u\}$,$k$,$p$)-\textsc{Supplier} problem with outliers, \cite{kociumaka2014constant} gave an algorithm to find an integral distance-24 transfer. W.l.o.g., they assume the optimal capacitated $k$-supplier value is 1. For each facility $u\in \calF$, define $deg(u)=|\{v\in \calC, \dist(u,v)\leq 1\}|$. They need to assume that $U_u\leq deg(u)$ by setting $U_u\leftarrow \min\{U_u, deg(u)\}$. The same as in Section \ref{sec:greedy}, we can still make this assumption. 
Another difference is that they reduce the problem to a tree instance via a data structure called \emph{skeleton}, see \citep[Definition 3]{kociumaka2014constant}. We need to verify the existence of a skeleton for the case that $L>0$. Fortunately, we can first modify $L=0$ and use the algorithm in \citep{kociumaka2014constant} to find a skeleton $S$. It is not hard to verify that $S$ is also a skeleton for $L$. Similarly, by \citep{kociumaka2014constant}, we have the following theorem.

\begin{theorem}
\label{lm:25app}
There is a polynomial time 25-approximation algorithm for the \ksupnonuniout\  problem and the ($L$,$\{U_u\}$,soft-$k$,$p$)-\textsc{Supplier} problem. For the uniform capacity upper bound version, the ($L$,$U$,$k$,$p$)-\textsc{Supplier} problem admits a 23-approximation, and the ($L$,$U$,soft-$k$,$p$)-\textsc{Supplier} problem admits a 13-approximation.
\end{theorem}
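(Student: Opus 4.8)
The plan is to follow the outlier rounding scheme of \citep{kociumaka2014constant} and combine it with the local-transfer machinery of Section \ref{sec:kcenter} so as to accommodate the uniform capacity lower bound. By Lemma \ref{lm:bipartite} it suffices to treat an induced \ksupnonuniout\ instance induced by some graph $G$; thus we may assume the optimum equals $1$ and, arguing exactly as in Section \ref{sec:greedy}, that $U_u\le\deg(u)$ for every facility $u$. Solving $\mathsf{LP}_1(G)$ yields a feasible fractional solution $(x,y)$, which now in addition satisfies all the (uniform) lower-bound constraints. The target is an integral distance-$24$ transfer $F$ of $y$ that is moreover \emph{local} in the sense of Lemma \ref{lm:luni}; Lemma \ref{lm:r+1} then turns $F$ into a distance-$25$ solution.

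The first issue is the skeleton. The algorithm of \citep{kociumaka2014constant} reduces the problem to a tree instance through their \emph{skeleton} data structure (their Definition 3) together with a bounded-distance aggregation of opening mass onto it. The value of $L$ enters only through the lower-bound clause of Definition \ref{def:transfer}; the skeleton definition and its construction are insensitive to $L$. I would therefore run their construction with $L$ temporarily set to $0$, obtain a skeleton $S$, and then check directly from the definition that $S$ remains a valid skeleton once the uniform lower bound $L$ is reinstated. This is the analogue of the remark made for the no-outlier case just before Theorem \ref{lm:9app}, and it requires nothing beyond $U_u\le\deg(u)$.

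Next I would verify locality along the whole pipeline. The initial aggregation onto the skeleton and the final step producing an integral open set $F\subseteq\calF$ from the tree solution each move every unit of opening mass only a bounded distance, so the witnessing flow $g$ of Lemma \ref{lm:luni} is immediate and these are local transfers. The substantive step is the tree-instance rounding, but here Lemma \ref{lm:treeinstance} already delivers an integral \emph{local} distance-$2$ transfer. Chaining the three stages via Observation \ref{ob:local} then produces an integral local distance-$24$ transfer of $y$, the constant $24$ arising from the same accounting as in \citep{kociumaka2014constant} (the tree edges have length at most a constant in $G$, so a tree distance-$2$ transfer is a bounded-distance transfer in the original metric), and Lemma \ref{lm:r+1} gives the claimed distance-$25$ solution for \ksupnonuniout. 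The soft variant ($L$,$\{U_u\}$,soft-$k$,$p$)-\textsc{Supplier} is handled identically, and since the hard and soft bottlenecks of \citep{kociumaka2014constant} coincide for non-uniform upper bounds it also admits a $25$-approximation. For uniform upper bounds the transfer constant improves to an integral (local) distance-$22$ transfer in the hard case and distance-$12$ in the soft case, hence $23$ and $13$ respectively by Lemma \ref{lm:r+1}.

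I expect the main obstacle to be the skeleton step: confirming that every clause of Definition 3 of \citep{kociumaka2014constant}, written with $L=0$ in mind, still holds verbatim under a uniform positive lower bound, and that each transfer in the pipeline respects the third condition of Definition \ref{def:transfer} (equivalently, admits the flow $g$ of Lemma \ref{lm:luni}). The remainder is bookkeeping of additive constants together with invocations of Lemma \ref{lm:treeinstance}, Observation \ref{ob:local}, and Lemma \ref{lm:r+1}.
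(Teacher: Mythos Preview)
Your proposal is correct and follows essentially the same approach as the paper: reduce to an induced instance via Lemma \ref{lm:bipartite}, assume $U_u\le\deg(u)$, run the skeleton construction of \citep{kociumaka2014constant} with $L$ temporarily set to $0$ and verify the resulting skeleton is valid for the original $L$, then check that each step of the rounding pipeline is a local transfer (invoking Lemma \ref{lm:treeinstance} for the tree step) and chain via Observation \ref{ob:local} to obtain an integral local distance-$24$ transfer, finishing with Lemma \ref{lm:r+1}. Your write-up is in fact more explicit about the locality bookkeeping than the paper's own proof sketch, which simply asserts that the \citep{kociumaka2014constant} rounding yields the stated constants once the skeleton and $U_u\le\deg(u)$ assumptions are justified.
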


Note that taking $\calC = \calF = V$ shows that the $k$-supplier problem generalizes the $k$-center problem, and consequently gives the same approximation bounds for the latter.

\begin{corollary}
\label{lm:25app}
There is a polynomial time 25-approximation algorithm for the \knonuniout\  problem and the ($L$,$\{U_u\}$,soft-$k$,$p$)-\textsc{Center} problem. For the uniform capacity upper bound version, the ($L$,$U$,$k$,$p$)-\textsc{Center} problem admits a 23-approximation, and the ($L$,$U$,soft-$k$,$p$)-\textsc{Center} problem admits a 13-approximation.
\end{corollary}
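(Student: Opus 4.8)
The plan is to deduce the corollary immediately from the preceding theorem on the $k$-supplier version, via the reduction noted in the sentence just before it: the capacitated $k$-center problem with two-sided bounds and outliers is precisely the special case of the $k$-supplier problem obtained by setting $\calC = \calF = V$. First I would spell out this reduction. Given a \knonuniout\ instance on vertex set $V$ with distance function $\dist$, capacity intervals $[L,U_v]$, and parameters $k,p$, form the $k$-supplier instance with $\calC = \calF = V$, the same distance function, the same capacity intervals, and the same $k,p$. The key point is that, under the identification $\calC = \calF = V$, a tuple $(C,F,\phi)$ is a feasible distance-$r$ solution of the center instance if and only if it is a feasible distance-$r$ solution of the supplier instance: the requirements $C \subseteq \calC$, $F \subseteq \calF$, $|F| = k$, $|C| \geq p$, the capacity constraints $L \leq |\phi^{-1}(u)| \leq U_u$ for $u \in F$, and the objective $\max_{v \in C}\dist(v,\phi(v))$ are literally the same in both instances (the center problem simply has no separate facility set to which $F$ must be restricted). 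Consequently the two instances have the same optimal value, and any polynomial time $\rho$-approximation algorithm for the supplier version, run on the reduced instance and with its output reinterpreted as a center solution, is a polynomial time $\rho$-approximation algorithm for the center version.

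Having set this up, I would simply invoke the supplier guarantees proved above, matching the soft/hard and uniform/non-uniform cases line by line: the $25$-approximation for \ksupnonuniout\ and for ($L$,$\{U_u\}$,soft-$k$,$p$)-\textsc{Supplier} yields the $25$-approximations for \knonuniout\ and ($L$,$\{U_u\}$,soft-$k$,$p$)-\textsc{Center}; the $23$-approximation for ($L$,$U$,$k$,$p$)-\textsc{Supplier} yields the $23$-approximation for ($L$,$U$,$k$,$p$)-\textsc{Center}; and the $13$-approximation for ($L$,$U$,soft-$k$,$p$)-\textsc{Supplier} yields the $13$-approximation for ($L$,$U$,soft-$k$,$p$)-\textsc{Center}. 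Since the reduction changes neither the shape of the capacity bounds (uniform versus non-uniform upper bounds) nor whether capacities are soft or hard, each of the four claimed bounds is exactly the corresponding supplier bound.

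There is essentially no obstacle here; the one point worth a quick sanity check is that every step feeding into the supplier theorem remains valid when $\calC$ and $\calF$ coincide rather than being disjoint — in particular the reduction to induced instances through Lemma~\ref{lm:bipartite}, the trimming assumption $U_u \leq deg(u)$, and the skeleton/tree-instance constructions imported from prior work. Inspecting those arguments, they use only the metric structure and the client--facility incidence, both of which remain well defined when $\calC=\calF$ (indeed Definition~\ref{def:ksupp} already records the center problem as a special case of the supplier problem), so the reduction goes through verbatim. I would record this as a single sentence rather than belabor it.
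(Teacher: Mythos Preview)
Your proposal is correct and matches the paper's own argument: the corollary is obtained directly from the preceding supplier theorem by taking $\calC=\calF=V$, exactly as stated in the sentence preceding the corollary. Your additional sanity check about the induced-instance reduction and skeleton construction still working when $\calC$ and $\calF$ coincide is a reasonable remark, but the paper does not spell this out and simply treats the specialization as immediate.
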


\clearpage
\newpage
\bibliographystyle{authordate1}
\bibliography{lipics-v2016-sample-article}
\appendix

\section{Proof of Lemma \ref{lm:bipartite}}
\label{app:lemma4}

\begin{proof}
Suppose we have such a polynomial time algorithm $A$ as described in the lemma, and the optimal capacitated $k$-supplier value of the input \ksupbnonuniout\  instance is $r>0$. We want to give a polynomial time algorithm that outputs a feasible solution with a capacitated $k$-supplier value at most $\rho r$.

We first guess the value of $r$. Note that there are at most $|\calC|\times \calF|$ possible values for the optimal capacitated $k$-supplier value $r$, we can enumerate all of them. Assume that we know the value $r$. We connect each pair $(u,v)$ ($u\in\calF$, $v\in\calC$) if and only if $\dist(u,v)\leq r$, and obtain an undirected graph $G=(\calF\cup\calC;E)$. Suppose $G$ can be decomposed into different connected components  $G_i=(\calC_i\cup \calF_i, E_i)$ ($i=1,2,\cdots,m$). Since the optimal capacitated $k$-supplier value of the input \ksupbnonuniout\ instance is $r$, there must exist client sets $C_i\subseteq \calC$ of size $p_i\in\bbbz_{\geq 0}$ satisfying that $\sum_{i=1}^m p_i\geq p$, open facility sets $F_i\subset \calF_i$ of size $k_i$ satisfying that $\sum_{i=1}^m k_i=k$, and an assignment of every client $v\in C_i$ ($i=1,2,\cdots,m$) to an open facility $u\in F_i$ such that $\dist(u,v)\leq r$ and the capacity constraints are satisfied.

For each connected component $G_i$, consider the distance function $\dist_{G_i}: (\calC_i\cup \calF_i) \times (\calC_i\cup \calF_i) \rightarrow \R_{\geq 0}$ induced by $G_i$. Combining with the above argument, we have that the optimal capacitated $k_i$-supplier value of the induced ($\{L_u\}$,$\{U_u\}$,$k_i$,$p_i$)-\textsc{Supplier} instance induced by $G_i$ is at most 1. If we have the exact value of $k_i$ and $p_i$, we can find a feasible solution with a capacitated $k_i$-supplier value at most $\rho$ for this induced instance by the assumption in the lemma. Observe that for any $s_1,s_2\in \calC_i\cup \calF_i$, we have that the original distance $\dist(s_1,s_2)$ between $s_1$ and $s_2$ satisfies that $\dist(s_1,s_2)\leq r\cdot \dist_{G_i}(s_1,s_2)$. Thus, the union of all such feasible solutions of each $G_i$ is a $\rho$-approximation feasible solution of the original \ksupbnonuniout\ problem. We remain to show that we can find values $k_i,p_i$ in polynomial time.

Let the boolean value $A_i[k'][p']$ indicate whether the algorithm $A$ finds a feasible solution with capacitated $k'$-supplier value at most $\rho$ for the induced ($\{L_u\}$,$\{U_u\}$,$k_i$,$p_i$)-\textsc{Supplier} instance induced by $G_i$. Let the boolean value $K[i][k'][p']$ indicate whether there exists a feasible solution that opens $k'$ facilities in $\calF_1\cup \calF_2\cup \cdots\cup \calF_i$ to cover at least $p'$ clients in $\calC_1\cup \calC_2\cup \cdots\cup \calC_i$, satisfying all capacity constraints and the capacitated $k'$-supplier value (the original \ksupbnonuniout\ instance) is at most $\rho r$. Initially, let $K[1][k'][p']=A_1[k'][p']$. For $i>1$, we update
$$ K[i][k'][p']=\bigvee_{0\leq k^*\leq k'\atop 0\leq p^*\leq p'}(K[i-1][k'-k^*][p'-p^*]\wedge A_i[k^*][p^*]).$$
Note that $K[m][k][p]=\true$ since all $A_i[k_i][p_i]=\true$. Thus in polynomial time we can find values $k_i$ and $p_i$ ($i=1,2,\cdots,m$) such that $A_i[k_i][p_i]=\true$ and $\sum_{i=1}^m k_i=k,\sum_{i=1}^m p_i\geq p$.
\end{proof}

\end{document}